\newtheorem{theorem}{Theorem}[section]
\newtheorem{lemma}{Lemma}[section]
\newtheorem{proposition}[theorem]{Proposition}
\newtheorem{remark}{Remark}[section]
\newtheorem{corollary}[theorem]{Corollary}
\numberwithin{equation}{section}
\date{}
\begin{document}

\title{On the uniqueness of the discrete Calder\'{o}n problem on multi-dimensional lattices\thanks{The work of M. Deng is supported by Hong Kong Research Grants Council via the Hong Kong PhD Fellowship Scheme. The work of B. Jin is supported by Hong Kong RGC General Research Fund (Projects 14306423 and 14306824),
ANR / RGC Joint Research Scheme (A-CUHK402/24) and a start-up fund from The Chinese University of Hong
Kong.}}
\author{
Maolin Deng\thanks{Department of Mathematics, The Chinese University of Hong Kong, Shatin, N.T., Hong Kong (\texttt{mldeng@link.cuhk.edu.hk}, \texttt{b.jin@cuhk.edu.hk})}
\and Bangti Jin\footnotemark[2]
}
 
\maketitle

\begin{abstract}
In this work, we investigate the discrete Calder\'{o}n problem on (hyper)cubic grid graphs in dimension three or higher. The discrete Calder\'{o}n problem is concerned with determining whether the discrete Dirichlet-to-Neumann matrix, which links boundary potentials to boundary current responses, can uniquely determine the conductivity values on the graph edges. We provide an affirmative answer to the question, thereby extending the classical uniqueness result of Curtis and Morrow for two-dimensional square lattices. The proof employs a slicing technique that decomposes the problem into lower-dimensional components. Additionally, we support the theoretical finding with numerical experiments that illustrate the effectiveness of the approach.
\\
\textbf{Keywords:} {discrete Calder\'{o}n problem, multi-dimensional grid graphs, discrete Dirichlet-to-Neumann map, conductivity recovery, slicing technique}
\end{abstract}

\section{Introduction}
The Calder\'{o}n problem named after Alberto Calder\'{o}n falls into the class of inverse boundary value problems, mostly investigated in an open  bounded domain $\Omega\subseteq\mathbb R^d$, and focuses on the recovery of the interior conductivity only by the measurements on the boundary $\partial\Omega$. Physically, the interior electric potential satisfies Kirchhoff's current law, which takes the form of a second-order elliptic partial differential equation in the divergence form. Since the seminal work of Alberto Calder\'{o}n \cite{Calderon:1980}, the Calder\'{o}n problem has been extensively studied in the last forty years, with many deep theoretical results \cite{Borcea,Uhlmann:2009,FeldmanSaloUhlmann:2025}. 

The discrete Calder\'{o}n problem focuses on a finite graph $G=(E,D,\partial D)$. The set $E$ contains edges that connect two nodes in the set $D\cup\partial D$. Roughly speaking, the electric system on the graph $G$ can be viewed as connecting homogeneous electric wires according to the graph structure $G$. Then the electric potential on the edges and nodes, that is, the wires and their connecting points, is fully determined by its value on nodes $D\cup \partial D$, and by a linear interpolation of nodal values for each edge.

These two versions of the Calder\'{o}n problem (continuous versus discrete) represent two fundamentally different ways to formalize the physical problem. The mathematical tools utilized to investigate these two  versions differ significantly. In particular, very different excitations (i.e., the boundary potential chosen to induce the current measurement) are employed in the analysis. For the continuous Calder\'{o}n problem, the so-called complex geometric optics solutions \cite{Uhlmann1987} with high-frequency oscillations represent one powerful tool to derive theoretical results. In contrast, in the discrete case, the high-frequency property is inherently incompatible with the finite graph structure, and a new type of excitation has been devised, i.e., excitations that are able to localize all interior potential within a sub-domain of the graph \cite{CurtisMorrow:1990}. 

Let $G$ be a graph with vertex set $\overline D = D \cup \partial D$, where $D$ and $\partial D$ are disjoint finite sets. Its edge set is $E = \{ pq : pq \in \overline D\}$, the unordered pairs on $\overline D$. For each vertex $p \in \overline D$, the neighbor set is $\mathcal{N}(p) = \{ q \in \overline D : pq \in E \}$. Every edge $pq$ is assigned a positive conductivity $\gamma_{pq} = \gamma_{qp} > 0$. 
Real-valued functions on any set $V$ are identified with vectors in $\mathbb{R}^V$, and we denote the potential function by $\mathbf{u} \in \mathbb{R}^{\overline D}$.

Physically, the potential $\mathbf{u}$ satisfies Kirchhoff's current law on the interior nodes, and mathematically, we express it via the discrete Laplacian operator $\Delta_\gamma$, defined by
\[
(\Delta_\gamma \mathbf{u})_p = \sum_{q \in \mathcal{N}(p)} \gamma_{qp} (\mathbf{u}_q - \mathbf{u}_p), \quad p \in D.
\]
One can thus formulate the Dirichlet problem:
\begin{equation}
    \left\{\begin{aligned}
        (\Delta_\gamma \mathbf{u})_p &= 0,\quad p \in D, \\
        \mathbf{u}_p &= \varphi_p,\quad p \in \partial D,
    \end{aligned}\right.\label{eqn:model}
\end{equation}
where $\varphi\in \mathbb{R}^{\partial D}$ is the Dirichlet boundary condition. The unique solvability of problem \eqref{eqn:model} is well known for general graphs \cite{CurtisMorrow:1990}. Then one can define an injective solution operator $S_\gamma$ by
\begin{equation}
    \begin{aligned}
    S_\gamma : \mathbb{R}^{\partial D} &\to \mathbb{R}^{D \cup \partial D}, \quad
    \varphi \mapsto \mathbf{u},
    \end{aligned}\label{eqn:Solution Map}
\end{equation}
where \(\mathbf{u}\) is the unique solution of problem \eqref{eqn:model} with the Dirichlet condition \(\varphi\).
Assume that each boundary vertex \(p \in \partial D\)  is connected to precisely one interior vertex, that is, \(\mathcal{N}(p) = \{q\} \subseteq D\). Then the boundary current data can be represented as
\[
\psi_p := \sum_{s \in \mathcal{N}(p)} \gamma_{sp} (\mathbf{u}_s - \mathbf{u}_p) = \gamma_{pq} (\mathbf{u}_q - \mathbf{u}_p).
\]
Formally, one can also define
\begin{equation}
    \begin{aligned}
    D_\gamma : \mathbb{R}^{D \cup \partial D} &\to \mathbb{R}^{\partial D}, \quad
    \mathbf{u} \mapsto \psi.
\end{aligned}\label{def:Neumann Map}
\end{equation}
The Dirichlet-to-Neumann (DtN)  matrix \(\Lambda_\gamma:\mathbb{R}^{\partial D}\to \mathbb{R}\) is the composition of these two operators:
\[
\Lambda_\gamma = D_\gamma \circ S_\gamma.
\]

Then the discrete Calder\'{o}n problem reads: Given the DtN matrix \(\Lambda_\gamma\), is it possible to uniquely identify the conductivity \(\gamma\)? In the literature it is also known as the discrete inverse conductivity problem. In their seminal work \cite{CurtisMorrow:1990}, Curtis and Morrow established the uniqueness of the discrete Calder\'{o}n problem on two-dimensional square lattices.  This study aims to extend the result of Curtis and Morrow \cite{CurtisMorrow:1990} to three and higher dimensional lattice structures.

Now we precisely define the graph \(G = (E, D, \partial D)\) in a high-dimensional grid. The set $D$ of interior nodes consists of all integer tuples \((x_i)_{i=1}^d\in \mathbb Z^d\) such that each coordinate $x_i$ adheres to the constraints \(1 \leq x_i \leq n\). This can be expressed as 
\begin{equation}
    D \coloneqq \{ (x_i)_{i=1}^d \mid x_i \in \mathbb{Z},\ 1 \leq x_i \leq n \}.
    \label{eqn:D_RECT}
\end{equation}
The set \(\partial D\) of  boundary vertices is comprised of nodes for which exactly one coordinate is equal to either \(0\) or \(n+1\), while the remaining coordinates range over the set \(\{1, \ldots, n\}\):
\begin{equation}
    \partial D \coloneqq \{p\in\mathbb Z^d:d(p,D):=\min_{q\in D}\|q-p\|_{\ell^1} = 1\},
    \label{eqn:Part_D_RECT}
\end{equation}
where $\|\cdot\|_{\ell^1}$ denotes the $\ell^1$ norm of vectors.
The edge set \(E\) consists of unordered pairs \(\{p, q\}\), where \(p = (p_i)_{i=1}^d\) and \(q = (q_i )_{i=1}^d\) are vertices in \(D \cup \partial D\) that meet the following two criteria: (i) $\|p-q\|_{\ell^1}=1$; and (ii) The pair does not reside entirely within the boundary, i.e., \(\{p, q\} \not\subseteq \partial D\). This defines a multi-dimensional hypercubic lattice graph:
\begin{equation}
    \label{def:3dGrids}
    G = (E, D, \partial D).
\end{equation}

The main result of this work is the following uniqueness result for the discrete Calder\'{o}n problem on multi-dimensional hypercubic graphs. 
\begin{theorem}[Uniqueness]
For the multi-dimensional grid graph \(G\) given in \eqref{def:3dGrids}, the DtN matrix \(\Lambda_\gamma\) uniquely determines the conductivity \(\gamma\).
\label{thm:main}
\end{theorem}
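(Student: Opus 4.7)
My plan is to prove Theorem~\ref{thm:main} by induction on the dimension $d$, with base case $d = 2$ supplied directly by the Curtis--Morrow theorem. Under the inductive hypothesis that uniqueness holds on every $(d-1)$-dimensional hypercubic grid, I reduce the $d$-dimensional inverse problem to a family of $(d-1)$-dimensional inverse problems indexed by the coordinate slices of $G$, so that the hypothesis can be invoked on each slice.

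The first task is to recover the conductivity $\gamma_{pq}$ on every edge joining a boundary node $p \in \partial D$ to its unique interior neighbour $q \in D$. Since the local relation $\psi_p = \gamma_{pq}(\mathbf u_q - \mathbf u_p)$ couples only two unknowns, a Curtis--Morrow-style localization --- probing $\Lambda_\gamma$ with carefully chosen boundary potentials concentrated at $p$ and at its boundary neighbours within the same face --- should disentangle $\gamma_{pq}$ from the interior response. Extending this step to arbitrary $d$ requires a specific choice of test excitations, but the single-interior-neighbour topology near $\partial D$ is identical in every dimension, so a generalisation of the original planar argument is plausible.

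The crux of the proof is a \emph{slicing lemma}. Fix the $x_d$-direction and let $\Sigma_k := D \cap \{x_d = k\}$ for $k = 1, \ldots, n$; intrinsically, each $\Sigma_k$ is a $(d-1)$-dim hypercubic grid whose boundary is formed by the lateral faces of $G$ at height $k$. From $\Lambda_\gamma$ together with the boundary-edge conductivities of the previous step, I would build a genuine $(d-1)$-dim DtN matrix $\tilde\Lambda_1$ for $\Sigma_1$ by layer-stripping inward in $x_d$. Concretely, I consider boundary data supported on the $x_d = 0$ face and on the lateral boundary nodes at height $x_d = 1$, vanishing elsewhere, and eliminate the remaining interior slices $\Sigma_2, \ldots, \Sigma_n$ by a Schur complement of the discrete Laplacian. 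The resulting effective boundary-to-boundary map on $\Sigma_1$ should, thanks to the perpendicular and nearest-neighbour nature of cross-slice edges, coincide with the $(d-1)$-dim DtN matrix of $\Sigma_1$ carrying positive (though modified) conductivities. The inductive hypothesis then recovers all within-$\Sigma_1$ conductivities, and a short algebraic post-processing yields the perpendicular edges connecting $\Sigma_1$ to $\Sigma_2$. Stripping $\Sigma_1$ off produces a grid with $n-1$ slices in the $x_d$ direction, and the procedure iterates; running the same construction along each of the $d$ coordinate axes then recovers every edge of $G$.

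The main obstacle is the slicing lemma itself. One must verify that the Schur-complement reduction produces a genuinely \emph{hypercubic} $(d-1)$-dim DtN matrix with \emph{positive} conductivities, rather than an arbitrary boundary operator with spurious long-range couplings that would fall outside the inductive hypothesis. The hypercubic product structure --- namely that cross-slice edges only connect $\Sigma_k$ to $\Sigma_{k \pm 1}$ and are perpendicular to the slice --- is what should prevent non-nearest-neighbour couplings from appearing in the elimination. Maintaining positivity at every stage of the inward stripping, and closing the auxiliary induction on slice depth $k$ simultaneously with the outer induction on $d$, will likely be the most delicate portion of the argument; it is precisely here that the product structure of the lattice, as opposed to that of a generic graph, plays an essential role.
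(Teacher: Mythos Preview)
Your approach is genuinely different from the paper's. The paper does not induct on the dimension~$d$; instead it slices the lattice along the \emph{diagonal} hyperplanes $L_t=\{\sum_i x_i=t\}$ and proceeds by a one-parameter induction on~$t$. The key device is a family of ``corner excitations'': boundary potentials $\varphi$ chosen in the kernel of a certain submatrix $T^{(t)}$ of $\Lambda_\gamma$, which force the resulting harmonic extension to vanish outside the corner $L_t^{\mathcal S}$. Knowing $\gamma$ on the edges below level~$t$, one can then propagate these localized solutions inward (a discrete Cauchy problem, Lemma~3.1) and set up a linear system in the unknown conductivities on~$E_t$ whose coefficient vectors are shown to span the whole space (Theorem~4.1). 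The diagonal slicing is essential because $L_t$ is an \emph{independent set} in the graph (Lemma~3.5): no two nodes of $L_t$ are joined by an edge, so the unknowns at level~$t$ decouple cleanly. Your coordinate slices $\Sigma_k$ do not have this property.

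Your proposal has a genuine gap at precisely the point you flag as the main obstacle, and the hoped-for resolution is incorrect. When you eliminate the interior slices $\Sigma_2,\ldots,\Sigma_n$ by a Schur complement, each node $p\in\Sigma_1$ becomes coupled to every other node $q\in\Sigma_1$ that is reachable from $p$ through $\Sigma_2\cup\cdots\cup\Sigma_n$; since that region is connected, the reduced operator on $\Sigma_1$ is effectively a \emph{complete} graph, not a nearest-neighbour $(d-1)$-dimensional grid. The perpendicularity of cross-slice edges does not prevent this: eliminating a single node $r\in\Sigma_2$ already connects $(r-\mathbf e_d)\in\Sigma_1$ to every lateral neighbour of~$r$ in~$\Sigma_2$, and iterating spreads the coupling throughout. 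Hence the inductive hypothesis, which applies only to hypercubic grids, cannot be invoked on the reduced object. A second difficulty is that even if the structure were preserved, the $(d-1)$-dimensional inverse problem would return the \emph{effective} (Schur-complemented) conductivities, and you give no mechanism to disentangle the original in-slice values of~$\gamma$ from the contributions of the eliminated layers. Neither issue is a technicality; both block the argument as written.
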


Theorem \ref{thm:main} extends the result of Curtis and Morrow  \cite{CurtisMorrow:1990} on the uniqueness of the discrete Calder\'{o}n problem for two-dimensional square lattices to multi-dimensional hypercubic graphs.
The proof of Theorem \ref{thm:main} employs mathematical induction and a slicing technique that effectively breaks down the complex lattice into more manageable components. The uniqueness of the conductivity \(\gamma\) is established through an inductive procedure that is applied slice by slice, based on the slice decomposition strategy. The inductive step is based on refined characterizations of the algebraic properties of the solution spaces $\mathcal{U}^{(t)}$ on hypercubic lattices. Moreover, the analysis strategy is constructive, and lends itself to an efficient algebraic reconstruction algorithm that can be used to recover the conductivity $\gamma$ from the DtN matrix $\Lambda_\gamma$. In Section \ref{Sec:Numerical}, we present the reconstruction algorithm and numerical illustrations to show the feasibility of the reconstruction.

Finally, we discuss related works on the discrete Calder\'{o}n problem. The problem in the two-dimensional case has been extensively studied since the seminal work of Curtis and Morrow \cite{CurtisMorrow:1990}. There are many interesting theoretical results for the discrete Calder\'{o}n problem on planar graphs, including square lattices  \cite{CurtisMorrow:1991} and circular planar graphs \cite{CurtisMorrow:1994,deVerdiGitler:1996,CurtisIngermanMorrow:1998,Ingerman:2000}. The work \cite{ChungBerenstein:2005} proved a uniqueness result from one boundary measurement under a certain monotonicity assumption. Boyer et al \cite{BoyerGarzella:2016} investigated the recovery of a complex-valued conductivity (or complex-valued potential) from the DtN matrix $\Lambda_\gamma$, and established the unique recovery up to a set of measure zero. { From the DtN matrix on a tree graph, Gernandt and Rohleder \cite{GernandtRohleder:2022} proved an explicit formula which relates the DtN matrix to the pairwise weighted distances of the leaves of the tree (and thus the weighted tree).} 

The discrete Calder\'{o}n problem in the multi-dimensional / non-planar case has only been scarcely studied so far. On cylindrical networks, Lam and Pylyavskyy  \cite{LamPylyavskyy:2012} investigated the recovery of the conductivity from the so-called $R$-response matrix,  proved that the problem generally does not have uniqueness, and established uniqueness for a certain class of purely cylindrical networks. The analysis in \cite{LamPylyavskyy:2012} relies heavily on grove combinatorics. Ara\'{u}z et al \cite{Arauz:2016} treated cubic lattices for a particular excitation, and proved the unique recovery of the conductivity from the Robin-to-Neumann map. The proof relies on proving that certain overdetermined boundary value problems on graphs admit a unique solution. Compared with the work \cite{Arauz:2016}, we employ the space of admissible excitations as the main tool, and the approach extends readily to general multi-dimensional hypercubic lattice graphs and possibly graphs of more general shapes. Moreover, the  proposed approach is constructive in the sense that it lends itself directly to an algebraic reconstruction algorithm similar to the Curtis-Morrow algorithm for square lattices \cite{CurtisMorrow:1991}.

The works \cite{Morioka:2011,ChungGilbert:2017,HorvathMarko:2016,Oberlin:2010} investigated a closely related inverse problem of recovering the potential in the discrete Schr\"{o}dinger operator from the (partial) DtN type matrix. Note that due to the lack of a neat Liouville type transformation on graphs, the discrete inverse Schr\"{o}dinger problem is not equivalent to the discrete Calder\'{o}n problem.  Interestingly, in the work \cite{HorvathMarko:2016}, the slicing decomposition was also crucially employed in the uniqueness proof. Although this work shares the spirit with \cite{HorvathMarko:2016}, the recovery of the conductivity $\gamma$ requires several additional investigations of the algebraic properties of the solution spaces $\mathcal{U}^{(t)}$, and thus the analysis is much more intricate.
Also we refer interested readers to \cite{BlastenIsozaki:2023,BlastenIsozakiLassasLu:2023,Corbett:2025} for other recent contributions on inverse problems on graphs. 

{ Formally, the discrete Calder\'{o}n problem can be viewed as a discretization of the continuous version. By drawing on this connection and suitable discretization schemes, Borcea and her collaborators \cite{BorceaDruskin:2008,BorceaDruskinMamonov:2013,BorceaDruskinMamonov:2010a,BorceaDruskinMamonov:2010,BorceaMamonov:2017} have developed several effective numerical schemes for the continuous Calder\'{o}n problem. Nonetheless, to the best of our knowledge, rigorously establishing the connection of the uniqueness and stability results between these two versions of the Calder\'{o}n problem is still missing.}

The rest of the paper is organized as follows.
In Section~\ref{Sec:Corner}, we construct a specifically designed boundary potential that confines nonzero currents to a  corner region. In Section~\ref{Sec:TOPOLOGY}, we  collect three fundamental topological properties of grid graphs, which are essential for the proof of Theorem~\ref{thm:main}. In  Section~\ref{Sec:PROOF}, we present the proof of Theorem \ref{thm:main}. The proof leads to an explicit reconstruction algorithm that incrementally recovers the conductivity \(\gamma\), slice by slice, relying solely on the DtN matrix \(\Lambda_\gamma\). Finally, in Section \ref{Sec:Numerical}, we present numerical results to illustrate the algorithm derived from the uniqueness proof. 

Throughout this work, we denote the space of real-valued functions defined on a set \(U\) by \(\mathbb{R}^U\). We extend functions defined on subsets \(U \subset V\) to functions on \(V\) by assuming they are zero outside \(U\) and then we can define the inclusion $\mathbb R^U\subseteq \mathbb R^V$. Given a matrix \(A:\mathbb R^{V_1}\rightarrow \mathbb R^{V_2}\), we define its sub-matrix with column indices \(U_1\subseteq V_1\) and row indices \(U_2\subseteq V_2\) by \(A(U_2;U_1)\). {Given a subset $U\subset \overline{D}$, the notation $\big|_{U}$ denotes the restriction of functions from the domain $\overline D$ to the region $U$, and the convention also applies to vectors defined on the edge set $E$. For any two subsets $V_1,V_2\subset \overline{D}$, we denote by $E(V_1,V_2)$ the subset of the set $E$ comprised of edges starting from the vertex set $V_1$ and ending at $V_2$.}

\section{Corner excitations}\label{Sec:Corner}

In this section, we describe a special type of boundary excitations that induce well localized potentials and give their important properties. This is the main technical tool to carry out the inductive steps along the slices.
The inductive step starts from a subdomain with the known conductivity, and then aims to recover the conductivity on a larger region. That is, the domain with known conductivity grows with the induction step, which eventually covers the entire edge set $E$. To rigorously formulate the inductive step, we define a class of growing regions slice by slice as
\begin{subequations}\label{eq:level_sets}
\begin{align}
L_t &= \left\{ (x_i)_{i=1}^d \in D : \sum_{i=1}^dx_i = t \right\},&L_t^{\mathcal{S}} &= \bigcup_{\ell=0}^t L_\ell = \left\{ (x_i)_{i=1}^d \in D : \sum_{i=1}^dx_i \le t \right\},  \\
K_t &= \left\{ (x_i)_{i=1}^d \in\partial D : \sum_{i=1}^dx_i = t \right\},&K_t^{\mathcal{S}} &= \bigcup_{\ell=0}^t K_\ell = \left\{ (x_i)_{i=1}^d \in\partial D : \sum_{i=1}^dx_i \le t \right\}.
\end{align}
\end{subequations}
The inductive step is that, given the conductivity $\gamma$ on the edge set $ $, we recover $\gamma$ on the enlarged edge set $E(L_{t}^\mathcal S,\overline{D})$. 
The analysis requires also considering a subset \(J_t^{\mathcal{S}}\) on the boundary $\partial D$ that lies in the lower part of the graph $G$ divided by $L_{t+1}$:
\begin{subequations}\label{eq:boundary_classification}
\begin{align}
K_t^+ &= \{ (x_i)_{i=1}^d  \in K_t : \max_i x_i = n+1 \}, &K_t^{\mathcal{S}+} &= \bigcup_{\ell=0}^t K_\ell^+,\\
K_t^- &= \{ (x_i)_{i=1}^d  \in K_t : \min_i x_i = 0 \},&K_t^{\mathcal{S}-}& = \bigcup_{\ell=0}^t K_\ell^-.\\
J_t &= K_t^- \cup K_{t+1}^+,  
&J_t^{\mathcal{S}}  &= K_t^{\mathcal{S}-} \cup K_{t+1}^{\mathcal{S}+}.\label{eq:J_sets}
\end{align}
\end{subequations}
We refer the reader to Fig. \ref{fig:interface_partition} for a schematic illustration of the interface partition in the three-dimensional case.

\begin{figure}[htb!]
    \centering
    \begin{tabular}{cc}
        \includegraphics[width=0.35\linewidth]{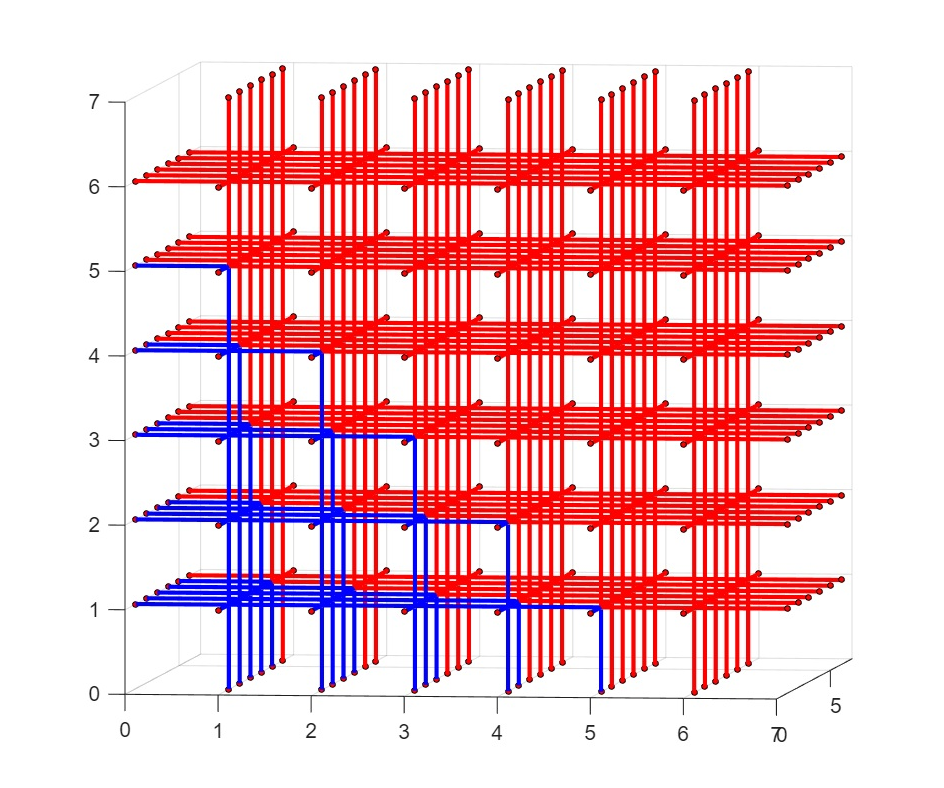} & \includegraphics[width=0.35\linewidth]{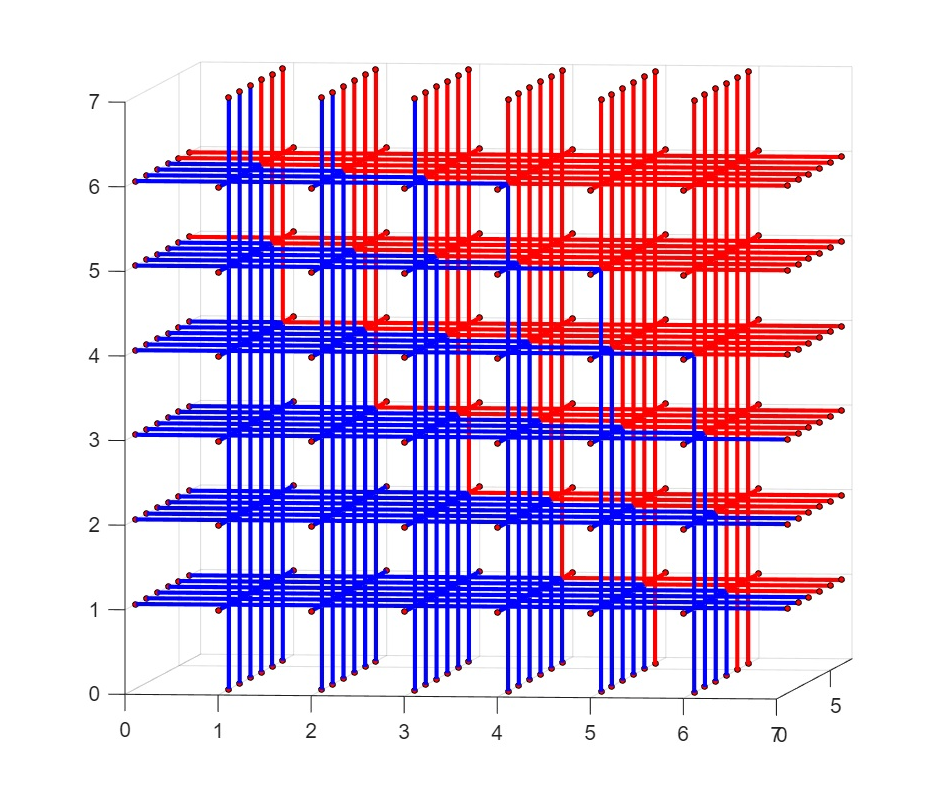} \\
        \(t \leq n+d-1\)& \(t > n+d-1\)\end{tabular}
    \caption{The interface partition within the domain \(D\subset \mathbb{Z}^3\). The blue region is in \(L_{t+1}^\mathcal{S}\).}
    \label{fig:interface_partition}
\end{figure}

Roughly speaking, to recover the conductivity $\gamma$ on the current sub-region $L_{t+1}^\mathcal S\cup K_{t+1}^\mathcal S$, we select a set of boundary excitations that ``focus" on the boundary of the sub-region $L_{t+1}^\mathcal{S}\cup K_{t+1}^\mathcal{S} $ and use its boundary measurements to finish the inductive step. The excitations are exactly in the kernel of the following map
\begin{equation}
\label{eq:Tt_operator}T^{(t)}  := \Lambda_\gamma(\partial D \setminus J_t^{\mathcal{S}}; J_t^{\mathcal{S}}).
 \end{equation}

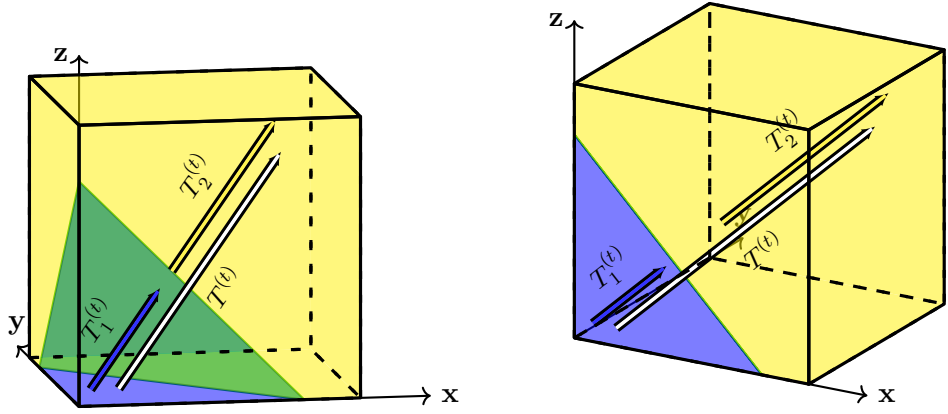
\begin{figure}[hbt!]
    \centering
  
\begin{tabular}{cc}
  \tdplotsetmaincoords{80}{-10} 

\begin{tikzpicture}[scale=1.5,line join=round,tdplot_main_coords ]

  \draw[->, thick] (0,0,0) -- (2.5,0,0) node[right] {\(\mathbf{x}\)};
  \draw[->, thick] (0,0,0) -- (0,2.5,0) node[above] {\(\mathbf{y}\)};
  \draw[->, thick] (0,0,0) -- (0,0,2.5) node[left] {\(\mathbf{z}\)};

  \coordinate (A) at (1.6,0,0);
  \coordinate (B) at (0,1.6,0);
  \coordinate (C) at (0,0,1.6);
  \fill[green!90,opacity=0.8] (A) -- (B) -- (C) -- cycle;

  \draw[green!70!black,thick] (A) -- (B) -- (C) -- cycle;
  \begin{scope}[opacity=0.3]
    \begin{scope}[canvas is xy plane at z=2]
      \clip (0,0) rectangle (2,2);
      \fill[yellow] (0,0) rectangle (2,2);
    \end{scope}

    \begin{scope}[canvas is yz plane at x=2]
      \clip (0,0) rectangle (2,2);
      \fill[yellow] (0,0) rectangle (2,2);
    \end{scope}

    \begin{scope}[canvas is xz plane at y=2]
      \clip (0,0) rectangle (2,2);
      \fill[yellow] (0,0) rectangle (2,2);
    \end{scope}

    \begin{scope}[canvas is xz plane at y=0]
      \clip (0,0) rectangle (2,2);
      \fill[blue,domain=0:1] (0,0) -- (1.6,0) -- (0,1.6) -- cycle;
      \fill[yellow,domain=1:2] (1.6,0) -- (2,0) -- (2,2) -- (0,2) -- (0,1.6) -- cycle;
    \end{scope}

    \begin{scope}[canvas is yz plane at x=0]
      \clip (0,0) rectangle (2,2);
      \fill[blue,domain=0:1,opacity=0.3] (0,0) -- (1.6,0) -- (0,1.6) -- cycle;
      \fill[yellow,domain=1:2] (1.6,0) -- (2,0) -- (2,2) -- (0,2) -- (0,1.6) -- cycle;
    \end{scope}

    \begin{scope}[canvas is xy plane at z=0]
      \clip (0,0) rectangle (2,2);
      \fill[blue,domain=0:1,opacity=0.3] (0,0) -- (1.6,0) -- (0,1.6) -- cycle;
      \fill[yellow,domain=1:2] (1.6,0) -- (2,0) -- (2,2) -- (0,2) -- (0,1.6) -- cycle;
    \end{scope}
  \end{scope}

  \draw[->, line width=1.3pt, double=white, double distance=1pt, 
    arrows={-Latex[length=5pt, width=5pt]}, 
    postaction={draw=blue!80, shorten >=1.5pt, -}] 
    (0.1,0.1,0.1) -- (0.7,0.7,0.7) 
    node[midway, sloped, above=2pt, scale=0.9] {\textit{\( T_1^{(t)} \)}};

  \draw[->, line width=1.3pt, double=white, double distance=1pt, 
    arrows={-Latex[length=5pt, width=5pt]}, 
    postaction={draw=yellow!80, shorten >=1.5pt, -}] 
    (0.8,0.8,0.8) -- (1.7,1.7,1.7)
    node[midway, sloped, above=2pt, scale=0.9] {\textit{\( T_2^{(t)} \)}};

  \draw[->, line width=1.3pt, double=white, double distance=1pt, 
    arrows={-Latex[length=5pt, width=5pt]}, 
    postaction={draw=white!80, shorten >=1.5pt, -}] 
    (0.3,0.1,0.1) -- (1.7,1.5,1.5)
    node[midway, sloped, below=2pt, scale=0.9] {\textit{\( T^{(t)} \)}};

  \begin{scope}[canvas is xy plane at z=0]
    \draw[loosely dashed, line width=1.2pt] (0,0) rectangle (2,2);
  \end{scope}
  \begin{scope}[canvas is yz plane at x=2]
    \draw[loosely dashed, line width=1.2pt] (0,0) rectangle (2,2);
  \end{scope}
  \begin{scope}[canvas is xz plane at y=2]
    \draw[loosely dashed, line width=1.2pt] (0,0) rectangle (2,2);
  \end{scope}
  \begin{scope}[canvas is xy plane at z=2]
    \draw[line width=1.2pt] (0,0) rectangle (2,2);
  \end{scope}
  \begin{scope}[canvas is yz plane at x=0]
    \draw[line width=1.2pt] (0,0) rectangle (2,2);
  \end{scope}
  \begin{scope}[canvas is xz plane at y=0]
    \draw[line width=1.2pt] (0,0) rectangle (2,2);
  \end{scope}

\end{tikzpicture} &  \tdplotsetmaincoords{70}{30} 

\begin{tikzpicture}[scale=1.4,line join=round,tdplot_main_coords ]

  \draw[->, thick] (0,0,0) -- (2.5,0,0) node[right] {\(\mathbf{x}\)};
  \draw[->, thick] (0,0,0) -- (0,2.5,0) node[above] {\(\mathbf{y}\)};
  \draw[->, thick] (0,0,0) -- (0,0,2.5) node[left] {\(\mathbf{z}\)};

  \coordinate (A) at (1.6,0,0);
  \coordinate (B) at (0,1.6,0);
  \coordinate (C) at (0,0,1.6);
  \fill[green!90,opacity=0.8] (A) -- (B) -- (C) -- cycle;

  \draw[green!70!black,thick] (A) -- (B) -- (C) -- cycle;
  \begin{scope}[opacity=0.3]
    \begin{scope}[canvas is xy plane at z=2]
      \clip (0,0) rectangle (2,2);
      \fill[yellow] (0,0) rectangle (2,2);
    \end{scope}

    \begin{scope}[canvas is yz plane at x=2]
      \clip (0,0) rectangle (2,2);
      \fill[yellow] (0,0) rectangle (2,2);
    \end{scope}

    \begin{scope}[canvas is xz plane at y=2]
      \clip (0,0) rectangle (2,2);
      \fill[yellow] (0,0) rectangle (2,2);
    \end{scope}

    \begin{scope}[canvas is xz plane at y=0]
      \clip (0,0) rectangle (2,2);
      \fill[blue,domain=0:1] (0,0) -- (1.6,0) -- (0,1.6) -- cycle;
      \fill[yellow,domain=1:2] (1.6,0) -- (2,0) -- (2,2) -- (0,2) -- (0,1.6) -- cycle;
    \end{scope}

    \begin{scope}[canvas is yz plane at x=0]
      \clip (0,0) rectangle (2,2);
      \fill[blue,domain=0:1,opacity=0.3] (0,0) -- (1.6,0) -- (0,1.6) -- cycle;
      \fill[yellow,domain=1:2] (1.6,0) -- (2,0) -- (2,2) -- (0,2) -- (0,1.6) -- cycle;
    \end{scope}

    \begin{scope}[canvas is xy plane at z=0]
      \clip (0,0) rectangle (2,2);
      \fill[blue,domain=0:1,opacity=0.3] (0,0) -- (1.6,0) -- (0,1.6) -- cycle;
      \fill[yellow,domain=1:2] (1.6,0) -- (2,0) -- (2,2) -- (0,2) -- (0,1.6) -- cycle;
    \end{scope}
  \end{scope}

  \draw[->, line width=1.3pt, double=white, double distance=1pt, 
    arrows={-Latex[length=5pt, width=5pt]}, 
    postaction={draw=blue!80, shorten >=1.5pt, -}] 
    (0.1,0.1,0.1) -- (0.5,0.5,0.5) 
    node[midway, sloped, above=2pt, scale=0.9] {\textit{\( T_1^{(t)} \)}};

  \draw[->, line width=1.3pt, double=white, double distance=1pt, 
    arrows={-Latex[length=5pt, width=5pt]}, 
    postaction={draw=yellow!80, shorten >=1.5pt, -}] 
    (0.8,0.8,0.8) -- (1.7,1.7,1.7)
    node[midway, sloped, above=2pt, scale=0.9] {\textit{\( T_2^{(t)} \)}};

  \draw[->, line width=1.3pt, double=white, double distance=1pt, 
    arrows={-Latex[length=5pt, width=5pt]}, 
    postaction={draw=white!80, shorten >=1.5pt, -}] 
    (0.3,0.1,0.1) -- (1.7,1.5,1.5)
    node[midway, sloped, below=2pt, scale=0.9] {\textit{\( T^{(t)} \)}};

  \begin{scope}[canvas is xy plane at z=0]
    \draw[loosely dashed, line width=1.2pt] (0,0) rectangle (2,2);
  \end{scope}
  \begin{scope}[canvas is yz plane at x=0]
    \draw[loosely dashed, line width=1.2pt] (0,0) rectangle (2,2);
  \end{scope}
  \begin{scope}[canvas is xz plane at y=2]
    \draw[loosely dashed, line width=1.2pt] (0,0) rectangle (2,2);
  \end{scope}
  \begin{scope}[canvas is xy plane at z=2]
    \draw[line width=1.2pt] (0,0) rectangle (2,2);
  \end{scope}
  \begin{scope}[canvas is yz plane at x=2]
    \draw[line width=1.2pt] (0,0) rectangle (2,2);
  \end{scope}
  \begin{scope}[canvas is xz plane at y=0]
    \draw[line width=1.2pt] (0,0) rectangle (2,2);
  \end{scope}

\end{tikzpicture}  
\end{tabular}
 
    \caption{$J_t^\mathcal 
    S$ {\rm(}blue{\rm)}, $L_{t+1}$ {\rm(}green{\rm)} and $\partial D\setminus J_t^\mathcal S$ {\rm(}yellow{\rm)} in the three dimensional case and its projected view.}
    \label{fig:ThreeLine}
\end{figure}

We can factorize the map $T^{(t)}$ into two parts: one is the map from the potential supported on the boundary $J_t^{\cal S}$ to the potential supported in the region $L_{t+1}$, and the other is the map from the potential supported on $L_{t+1}$ to the boundary currents on $\partial D\setminus J_t^{\cal S}$. We refer the reader to Fig. \ref{fig:ThreeLine} for a schematic illustration in the three-dimensional case:  the sought-for excitations are precisely supported on the blue nodes with the boundary current vanishing at yellow nodes. More precisely, we define the map $T_1^{(t)}:\mathbb{R}^{J_t^\mathcal{S}}\to \mathbb{R}^{L_{t+1}}$ by \begin{align}
     T_1^{(t)} := S_\gamma(L_{t+1}; J_t^{\mathcal{S}}) .
    \end{align}
Then we denote the upper subgraph \(G^{(t)}\) (i.e., above the interface $L_{t+1}$) by
\begin{equation}\label{eq:subgraph_Gt}
G^{(t)} = \bigl(E^{(t)},\, D \setminus L_{t+1}^\mathcal{S},\,L_{t+1} \cup (\partial D\setminus J_t^{\mathcal{S}})\bigr),
\end{equation}
where the set \(E^{(t)}\) contains all edges in the set \(E\) start or end at one node in \(D \setminus L_{t}^\mathcal{S}\). In view of \cite[Remark 3]{CurtisMorrow:1990}, the unique solvability of the discrete Dirichlet problem on the subgraph $G^{(t)}$ holds,  and we denote by \( S_\gamma^{(t)} \) the solution operator on the subgraph \( G^{(t)} \). Hence the DtN map \(\Lambda^{(t)}\) on the subgraph $G^{(t)}$ is well defined. We can now describe the map $T_2^{(t)}: \mathbb{R}^{L_{t+1}} \to \mathbb{R}^{\partial D \setminus J_t^{\mathcal{S}}}$, from the potential at $L_{t+1}$ to the current at $\partial D\setminus J_t^{\cal S}$, by a submatrix of \(\Lambda^{(t)}\):
\begin{equation}\label{eq:T2t_operator}
T_2^{(t)} := \Lambda^{(t)}(\partial D \setminus J_t^{\mathcal{S}}; L_{t+1}) .
\end{equation}

The next result rigorously establishes the factorization.
\begin{proposition}\label{prop:factorization}
For any \( t \) in the range \( d-1 \leq t \leq dn - 1 \), the operator \( T^{(t)} \) can be factorized as 
\begin{equation*}
    T^{(t)} = T_2^{(t)} \circ T_1^{(t)}. 
\end{equation*}
\end{proposition}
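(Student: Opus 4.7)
The plan is to prove the identity by tracing a single potential. Fix $\varphi \in \mathbb{R}^{J_t^{\mathcal S}}$, extend it by zero on $\partial D \setminus J_t^{\mathcal S}$ to $\tilde\varphi \in \mathbb{R}^{\partial D}$, and let $\mathbf{u} = S_\gamma \tilde\varphi$ be the solution of \eqref{eqn:model} on the full graph $G$. By the definition \eqref{eq:Tt_operator} of $T^{(t)}$ we have $T^{(t)}\varphi = (D_\gamma \mathbf{u})|_{\partial D \setminus J_t^{\mathcal S}}$, and by construction $\mathbf{u}|_{L_{t+1}} = T_1^{(t)}\varphi$. The goal is to show that this same boundary current is also produced, via $\Lambda^{(t)}$, by the trace $T_1^{(t)}\varphi$ on $L_{t+1}$.

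The central step is to restrict $\mathbf{u}$ to the node set of $G^{(t)}$, call the restriction $\mathbf{v}$, and verify that $\mathbf{v}$ solves the Dirichlet problem on $G^{(t)}$ with boundary data $T_1^{(t)}\varphi$ on $L_{t+1}$ and zero on $\partial D \setminus J_t^{\mathcal S}$. The two boundary conditions are immediate from the definitions. For Kirchhoff's law at an interior node $p \in D \setminus L_{t+1}^{\mathcal S}$ of $G^{(t)}$ one observes that $\sum_i p_i \geq t+2$, so every neighbour $q$ (with $\|p-q\|_{\ell^1}=1$) satisfies $\sum_i q_i \geq t+1$. A short case analysis—distinguishing whether $q$ is interior, boundary of type $K^-$ (obtained by $1 \to 0$ and hence $\sum q_i = \sum p_i - 1$), or boundary of type $K^+$ (obtained by $n \to n+1$ and hence $\sum q_i = \sum p_i + 1$)—shows that $q$ always lies in $(D \setminus L_{t+1}^{\mathcal S}) \cup L_{t+1} \cup (\partial D \setminus J_t^{\mathcal S})$, i.e., in the node set of $G^{(t)}$. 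Hence the identity $(\Delta_\gamma \mathbf{u})_p = 0$ transcribes verbatim into the discrete Laplacian of $\mathbf{v}$ on $G^{(t)}$.

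Invoking the unique solvability of Dirichlet problems on $G^{(t)}$ recalled in the excerpt, $\mathbf{v}$ coincides with the solution of the Dirichlet problem on $G^{(t)}$ with the stated boundary data. Because that data is supported on $L_{t+1}$, the restriction of $\Lambda^{(t)}$ to $\partial D \setminus J_t^{\mathcal S}$ contributes only the columns indexed by $L_{t+1}$, giving precisely $T_2^{(t)}(T_1^{(t)}\varphi)$. The final bookkeeping is to verify that, for each $p \in \partial D \setminus J_t^{\mathcal S}$, its unique interior neighbour $q$ in $G$ lies in the node set of $G^{(t)}$ (the same coordinate-sum computation, using $\sum p_i \geq t+1$ when $\min p_i = 0$ and $\sum p_i \geq t+2$ when $\max p_i = n+1$), so that the boundary current $\gamma_{pq}(\mathbf{u}_q - \mathbf{u}_p)$ computed on $G$ equals the boundary current $\gamma_{pq}(\mathbf{v}_q - \mathbf{v}_p)$ computed on $G^{(t)}$. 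This yields $T^{(t)}\varphi = T_2^{(t)}T_1^{(t)}\varphi$ for every $\varphi$, completing the factorisation.

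The main obstacle will be the neighbour-containment verification in the two places above. The asymmetric definition of $J_t = K_t^- \cup K_{t+1}^+$—shifted by one between the upper and lower faces—is precisely what makes these containments work, and a careless argument would erroneously allow a neighbour of an interior node of $G^{(t)}$ to land on $K_{t+1}^-$ or on $K_t^+$ and thereby break the restriction. Once these memberships are explicitly ruled out for both ranges $d-1 \le t \le n+d-1$ and $n+d-1 < t \le dn-1$ (the two regimes depicted in Fig.~\ref{fig:interface_partition}), the remainder of the argument is linear-algebraic and direct.
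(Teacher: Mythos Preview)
Your proposal is correct and follows essentially the same route as the paper: fix $\varphi$, extend by zero, let $\mathbf{u}=S_\gamma\varphi$, observe that its restriction to $G^{(t)}$ is harmonic there with boundary data $T_1^{(t)}\varphi$ on $L_{t+1}$ and zero on $\partial D\setminus J_t^{\mathcal S}$, invoke uniqueness on $G^{(t)}$, and read off the boundary current. The only difference is that you spell out the neighbour-containment verification (that interior nodes of $G^{(t)}$ have all their $G$-neighbours in the node set of $G^{(t)}$, and that boundary currents agree) via explicit coordinate-sum bookkeeping, whereas the paper simply asserts ``$\mathbf{u}$ is $\gamma$-harmonic on $G$, and also within the subgraph $G^{(t)}$'' without justification; your added care is warranted but does not change the strategy.
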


\begin{proof}
Fix any arbitrary \(\varphi \in \mathbb{R}^{J_t^{\mathcal{S}}}\), and consider the zero extension of $\varphi$ to $\partial D$, still denoted by $\varphi$. The corresponding interior potential $\mathbf{u}$ is given by
$ S_\gamma \varphi$. By its definition, \(\mathbf{u}\) is \(\gamma\)-harmonic on \( G \), and also within the subgraph \( G^{(t)} \). By the definition of the operator \( T_1^{(t)}\), we have
\[
\mathbf{u}|_{L_{t+1}} = T_1^{(t)} \varphi.
\]
Since the discrete Dirichlet problem on \( G^{(t)} \) has a unique solution, the restriction of \(\mathbf{u}\) to the vertices of \( G^{(t)} \) (i.e., $\overline{D}\setminus J_t^\mathcal{S})$  can be represented by
\[
\mathbf{u}|_{\overline{D}\setminus J_t^\mathcal{S}} = S_\gamma^{(t)} \bigl( T_1^{(t)} \varphi \bigr). 
\]
Now we compute the boundary current and arrive at
\[
T^{(t)} \varphi = T_2^{(t)} \bigl( T_1^{(t)} \varphi \bigr).
\]
Since this relationship holds for any choice of \(\varphi\in \mathbb{R}^{J_t^{\mathcal{S}}}\), the desired factorization follows. 
\end{proof}

\begin{remark}
The factorization in Proposition \ref{prop:factorization} leads directly to the kernel inclusion relation:
\begin{equation}\label{eq:kernel_inclusion}
\ker T_1^{(t)} \subseteq \ker T^{(t)}.
\end{equation}
The identity \(\ker T_1^{(t)} = \ker T^{(t)}\) then follows from the injectivity of $T_2^{(t)}$, which will be proved in Section \ref{Sec:TOPOLOGY}.
\end{remark}

If the boundary potential $\varphi$ is chosen from $\ker T_1^{(t)}$, the interior potentials are  localized within the set $J_t^\mathcal{S} \cup L_t^\mathcal{S}$, since the Dirichlet boundary condition of the upper graph $G^{(t)}$ vanishes identically by the definition of the map $T_1^{(t)}$. We now define the solution spaces
\begin{equation}\label{eq:solution_space}
\mathcal{U}^{(t)} := \bigl\{\, S_\gamma \varphi \mid \varphi \in \ker T_1^{(t)} \bigr\} \big|_{J_t^\mathcal{S} \cup L_t^\mathcal{S}},
\end{equation}
where $S_\gamma$ is the solution operator defined in \eqref{eqn:Solution Map}. 
Note that, when recovering the conductivity on the edge set $E(L_{t} \cup K_t,L_{t+1} \cup K_{t+1})$, only the boundary potentials in the quotient space 
\[
\ker T_1^{(t)} \big/ \ker T_1^{(t-1)}
\]
provide active constraints, since for any $\varphi\in\ker T_1^{(t-1)}$ no current flows through $E(L_{t} \cup K_t,L_{t+1} \cup K_{t+1})$. The quotient makes sense in view of the following inclusion result in the sense $\mathbb R^{J_{t-1}^\mathcal{S}}\subseteq \mathbb R^{J_{t }^\mathcal{S}} $.

\begin{proposition}\label{prop:monotonicity}
For \(d \leq t \leq dn-1\):
\[
\ker T_1^{(t-1)} \subseteq \ker T_1^{(t)} \   
\]
\end{proposition}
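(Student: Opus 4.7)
The plan is to read $\ker T_1^{(t-1)} \subseteq \ker T_1^{(t)}$ as a unique-continuation statement for the Dirichlet problem on the upper subgraph $G^{(t-1)}$ defined in \eqref{eq:subgraph_Gt}. Fix any $\varphi \in \ker T_1^{(t-1)}$ and view it as an element of $\mathbb{R}^{J_t^{\mathcal{S}}}$ via the zero extension induced by the inclusion $J_{t-1}^{\mathcal{S}} \subseteq J_t^{\mathcal{S}}$. Further extend by zero to $\mathbb{R}^{\partial D}$, and let $\mathbf{u} := S_\gamma \varphi$ be the corresponding $\gamma$-harmonic potential. The goal is to conclude $\mathbf{u}|_{L_{t+1}} = 0$, which is exactly $T_1^{(t)} \varphi = 0$.

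The second step is to record two independent reasons for $\mathbf{u}$ to vanish on the boundary of $G^{(t-1)}$. First, because $\varphi \in \ker T_1^{(t-1)} = \ker S_\gamma(L_t; J_{t-1}^{\mathcal{S}})$, the restriction $\mathbf{u}|_{L_t}$ is identically zero. Second, because $\varphi$ is originally supported on $J_{t-1}^{\mathcal{S}}$, the extension vanishes on the complementary boundary nodes $\partial D \setminus J_{t-1}^{\mathcal{S}}$, hence $\mathbf{u}|_{\partial D \setminus J_{t-1}^{\mathcal{S}}} = \varphi|_{\partial D \setminus J_{t-1}^{\mathcal{S}}} = 0$. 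Together these two sets form exactly the boundary $(L_t \cup \partial D) \setminus J_{t-1}^{\mathcal{S}}$ of the subgraph $G^{(t-1)} = (E^{(t-1)}, D \setminus L_t^{\mathcal{S}}, (L_t \cup \partial D) \setminus J_{t-1}^{\mathcal{S}})$.

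Finally, restrict $\mathbf{u}$ to $G^{(t-1)}$: since $\mathbf{u}$ is $\gamma$-harmonic on all of $D$ it is in particular $\gamma$-harmonic on $D \setminus L_t^{\mathcal{S}}$, and by the previous step its trace on the boundary of $G^{(t-1)}$ is identically zero. The unique solvability of the discrete Dirichlet problem on $G^{(t-1)}$, noted in the paragraph preceding \eqref{eq:T2t_operator}, then forces $\mathbf{u} \equiv 0$ on $D \setminus L_t^{\mathcal{S}}$. Since $L_{t+1} \subseteq D \setminus L_t^{\mathcal{S}}$, we obtain $T_1^{(t)} \varphi = \mathbf{u}|_{L_{t+1}} = 0$, as desired.

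No serious obstacle is anticipated; the only point that requires a moment's care is to verify that the restriction of $\mathbf{u}$ to $G^{(t-1)}$ really satisfies homogeneous Dirichlet data on the \emph{full} subgraph boundary, i.e.\ that $L_t$ and $\partial D \setminus J_{t-1}^{\mathcal{S}}$ exhaust that boundary and that the two vanishing reasons above do not overlook any node. This is a direct bookkeeping check using $J_{t-1}^{\mathcal{S}} \subseteq J_t^{\mathcal{S}}$ and the definition \eqref{eq:J_sets}, after which unique solvability closes the argument.
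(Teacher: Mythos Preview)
Your proof is correct and follows essentially the same approach as the paper's: both argue that if $\varphi\in\ker T_1^{(t-1)}$ then $\mathbf u=S_\gamma\varphi$ vanishes on $L_t$ and on $\partial D\setminus J_{t-1}^{\mathcal S}$, hence on the full boundary of $G^{(t-1)}$, and unique solvability forces $\mathbf u\equiv0$ on $D\setminus L_t^{\mathcal S}\supseteq L_{t+1}$. The paper compresses this into the single phrase ``By construction, $\mathbf u=0$ in $D\setminus L_t$'', relying on the localization discussion just before \eqref{eq:solution_space}; your version spells out the subgraph Dirichlet step explicitly, which is a welcome clarification but not a different argument.
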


\begin{proof}
For any \(\varphi \in \ker T_1^{(t-1)}\), let \(\mathbf{u} := S_\gamma \varphi\). By the construction,
\[
\mathbf{u} = 0 \quad \text{in } D \setminus L_{t-1} ^\mathcal S\supseteq L_{t+1},
\]
with \(\mathrm{supp}(\varphi) \subseteq J_{t-1}^\mathcal{S} \subseteq J_t^\mathcal{S}\). This implies
$\varphi \in \ker T_1^{(t)}$ and hence the desired assertion $\ \ker T_1^{(t-1)} \subseteq \ker T_1^{(t)}$ holds. 
\end{proof}

The monotonicity of \(\mathcal{U}^{(t)}\) follows directly from Proposition \ref{prop:monotonicity}:
\[
\mathcal{U}^{(t-1)} = \{S_\gamma \varphi : \varphi \in \ker T_1^{(t-1)}\} \subseteq \{S_\gamma \varphi : \varphi \in \ker T_1^{(t)}\} = \mathcal{U}^{(t)}.\]
The set of localized potentials  $\mathcal U^{(t)}$ is a linear subspace of $\mathbb R^{J_t^\mathcal{S} \cup L_t^\mathcal{S}}$. The next result gives its complement. Essentially it is a rewriting of the governing equation in \eqref{eqn:model} for the localized solution in $\mathcal U^{(t)}$. 
\begin{lemma}\label{lem:CHAR}
For each node \( p \in \overline D \), define the vector \( \mathbf{v}_p \in \mathbb{R}^{\overline{D}} \) componentwise by
\begin{equation}
\label{eqn:vp}
(\mathbf{v}_p)_q \coloneqq
\begin{cases}
\gamma_{pq}, & \text{if } q \in \mathcal{N}(p), \\
-\sum_{r \in \mathcal{N}(p)} \gamma_{pr}, & \text{if } q = p, \\
0, & \text{otherwise}.
\end{cases}
\end{equation}
 Then, for any integer \( t \) satisfying \( d-1 \leq t \leq dn-1 \), the following orthogonal decomposition holds:
\[
\mathcal{U}^{(t)} \oplus \operatorname{span}\left\{ \mathbf{v}_p\big|_{L_t^{\mathcal{S}} \cup J_t^{\mathcal{S}}} : p \in L_{t+1}^{\mathcal{S}} \right\} = \mathbb{R}^{L_t^{\mathcal{S}} \cup J_t^{\mathcal{S}}}.
\]
\end{lemma}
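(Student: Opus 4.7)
The plan is to show that $\mathcal{U}^{(t)}$ coincides with the orthogonal complement, inside $V := \mathbb{R}^{L_t^\mathcal{S}\cup J_t^\mathcal{S}}$ equipped with the standard Euclidean inner product, of the subspace
\[
\mathcal{W} := \operatorname{span}\bigl\{\mathbf{v}_p|_{L_t^\mathcal{S}\cup J_t^\mathcal{S}} : p\in L_{t+1}^\mathcal{S}\bigr\}.
\]
Once the identity $\mathcal{U}^{(t)} = \mathcal{W}^\perp$ is in hand, the orthogonal decomposition $V = \mathcal{U}^{(t)}\oplus \mathcal{W}$ follows from elementary linear algebra.

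The first step is to recharacterize $\mathcal{U}^{(t)}$ intrinsically. Given $\varphi\in\ker T_1^{(t)}$ and $\mathbf{u} := (S_\gamma\varphi)|_{L_t^\mathcal{S}\cup J_t^\mathcal{S}}$, I would argue that the zero extension $\tilde{\mathbf{u}}$ of $\mathbf{u}$ to all of $\overline{D}$ coincides with $S_\gamma\varphi$. Indeed, restricted to the subgraph $G^{(t)}$ of \eqref{eq:subgraph_Gt}, $S_\gamma\varphi$ is $\gamma$-harmonic with vanishing boundary values on both $L_{t+1}$ (by $\varphi\in\ker T_1^{(t)}$) and $\partial D\setminus J_t^\mathcal{S}$ (by the zero extension of $\varphi$); the unique solvability recorded after \eqref{eq:subgraph_Gt} then forces $S_\gamma\varphi \equiv 0$ on $\overline{D}\setminus(L_t^\mathcal{S}\cup J_t^\mathcal{S})$. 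Thus $\tilde{\mathbf{u}}$ is $\gamma$-harmonic on all of $D$. Conversely, any $\tilde{\mathbf{u}}\in\mathbb{R}^{\overline{D}}$ supported in $L_t^\mathcal{S}\cup J_t^\mathcal{S}$ and $\gamma$-harmonic on $D$ arises in this way via $\varphi := \tilde{\mathbf{u}}|_{J_t^\mathcal{S}}$, by the uniqueness of the discrete Dirichlet problem \eqref{eqn:model}.

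The crux is a purely geometric ``insulation'' observation about the grid: for every $p\in D\setminus L_{t+1}^\mathcal{S}$, no neighbour of $p$ lies in $L_t^\mathcal{S}\cup J_t^\mathcal{S}$. This reduces to a short case analysis. Since $p\in D$ has all coordinates in $\{1,\dots,n\}$, every neighbour of $p\in L_\ell$ lies in $L_{\ell-1}\cup L_{\ell+1}\cup K_{\ell-1}^-\cup K_{\ell+1}^+$; when $\ell\ge t+2$, each of these four strata is disjoint from $L_t^\mathcal{S}\cup K_t^{\mathcal{S}-}\cup K_{t+1}^{\mathcal{S}+}$ by direct inspection of the defining index ranges in \eqref{eq:level_sets}--\eqref{eq:boundary_classification}. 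Consequently, for any $\tilde{\mathbf{u}}$ supported in $L_t^\mathcal{S}\cup J_t^\mathcal{S}$ and any such $p$, both $\tilde{\mathbf{u}}_p = 0$ and $\tilde{\mathbf{u}}|_{\mathcal{N}(p)} = 0$, so $(\Delta_\gamma \tilde{\mathbf{u}})_p = 0$ holds automatically. Hence $\tilde{\mathbf{u}}$ is $\gamma$-harmonic on $D$ if and only if it is $\gamma$-harmonic on $L_{t+1}^\mathcal{S}$.

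To conclude, I would invoke the identity $(\Delta_\gamma\tilde{\mathbf{u}})_p = \langle \mathbf{v}_p,\tilde{\mathbf{u}}\rangle$ read off from \eqref{eqn:vp}, which, since $\tilde{\mathbf{u}}$ is supported in $L_t^\mathcal{S}\cup J_t^\mathcal{S}$, equals $\langle \mathbf{v}_p|_{L_t^\mathcal{S}\cup J_t^\mathcal{S}},\mathbf{u}\rangle$. Combining the previous two steps, $\mathbf{u}\in\mathcal{U}^{(t)}$ is equivalent to $\langle \mathbf{v}_p|_{L_t^\mathcal{S}\cup J_t^\mathcal{S}},\mathbf{u}\rangle = 0$ for every $p\in L_{t+1}^\mathcal{S}$, i.e., $\mathcal{U}^{(t)} = \mathcal{W}^\perp$, as desired. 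The main potential obstacle I foresee is precisely the insulation case check: because $V$ asymmetrically combines $K^-$ boundary data up to level $t$ with $K^+$ boundary data up to level $t+1$, one must verify that every boundary-adjacent type $K_{\ell\pm1}^\pm$ of neighbour of a node above $L_{t+1}$ is excluded from both pieces. Once this geometric fact is pinned down, the rest is a direct algebraic rewriting.
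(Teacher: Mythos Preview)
Your proposal is correct and follows essentially the same route as the paper: both arguments establish $\mathcal{U}^{(t)} = \mathcal{W}^\perp$ by identifying $\mathbf{u}\in\mathcal{U}^{(t)}$ with zero-extended functions that are $\gamma$-harmonic on $D$, and then rewriting the harmonicity conditions $(\Delta_\gamma\tilde{\mathbf{u}})_p=0$ for $p\in L_{t+1}^{\mathcal{S}}$ as the orthogonality relations $\langle\mathbf{v}_p|_{L_t^{\mathcal{S}}\cup J_t^{\mathcal{S}}},\mathbf{u}\rangle=0$. The only substantive difference is that you make explicit the ``insulation'' step---that for $p\in D\setminus L_{t+1}^{\mathcal{S}}$ the equation $(\Delta_\gamma\tilde{\mathbf{u}})_p=0$ holds automatically because neither $p$ nor any of its neighbours meets $L_t^{\mathcal{S}}\cup J_t^{\mathcal{S}}$---whereas the paper asserts harmonicity on all of $D$ without spelling this out.
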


\begin{proof}
Fix any \( \mathbf{u} \in \mathcal{U}^{(t)} \) and extend it onto $D\cup\partial D$ by zero (still denoted by $\mathbf u$). By the definition of potential solutions, for \( p \in L_{t+1}^{\mathcal{S}} \), we have 
$$ \sum_{q \in \mathcal{N}(p)} \gamma_{pq} (\mathbf{u}_q - \mathbf{u}_p) = 0,$$ or equivalently by $\operatorname{supp}(\mathbf u)=  L_t^{\mathcal{S}} \cup J_t^{\mathcal{S}}$, we have $\mathbf{v}_p\big|_{L_t^{\mathcal{S}} \cup J_t^{\mathcal{S}} }\cdot \mathbf u = 0 $.
Thus we have
\[
\operatorname{span}\left\{ \mathbf{v}_p\big|_{L_t^{\mathcal{S}} \cup J_t^{\mathcal{S}}} : p \in L_{t+1}^{\mathcal{S}} \right\} \subseteq \left( \mathcal{U}^{(t)} \right)^\perp.
\]
Conversely, suppose that \( \mathbf{u} \in ( \operatorname{span}_{p \in L_{t+1}^{\mathcal{S}}} \{ \mathbf{v}_p\big|_{L_t^{\mathcal{S}} \cup J_t^{\mathcal{S}}} \} )^\perp \). Then, 
\[
\mathbf{v}_p\big|_{L_t^{\mathcal{S}} \cup J_t^{\mathcal{S}}} \cdot \mathbf{u} = 0,\quad \forall p \in L_{t+1}^{\mathcal{S}}.
\]
After extending $\mathbf u$ onto $ D\cup\partial D$ by zero, the above equation implies that  \( -\Delta_\gamma \mathbf{u}_p = 0 \) for all \( p \in D \). Therefore, we have
\[
\mathbf{u} = S_\gamma(\mathbf{u}|_{\partial D}) \big|_{L_t^{\mathcal{S}}\cup J_t^{\mathcal{S}}} \in \mathcal{U}^{(t)},
\]
and there holds
$\mathcal{U}^{(t)} \supseteq ( \operatorname{span}\{ \mathbf{v}_p\big|_{L_t^{\mathcal{S}} \cup J_t^{\mathcal{S}}} : p \in L_{t+1}^{\mathcal{S}}\} )^\perp$.
Combining these two inclusions yields the desired orthogonal decomposition.
\end{proof}

\begin{remark}
The definition of excitations and their properties do not rely on the explicit grid graph structure. Whenever one can divide the graph with the boundary nodes into two connected parts, one can define such localized spaces, and the respective properties remain valid. 
\end{remark}

\section{Properties of lattice graphs}\label{Sec:TOPOLOGY}
Note that the unique determination of the conductivity $\gamma$ from the DtN matrix $\Lambda_\gamma$ is not necessarily valid for a general graph structure. In fact, under the given definitions of $D$ and $\partial D$, if we choose the edge set to be complete in $D$ -- that is, for every different nodes $p$ and $q$ in $D$, there exists an edge $pq$ in $E$ -- then the number of undetermined conductivity parameters, scaling as $O(n^{2d})$, exceeds the size of DtN matrix $\Lambda_\gamma$, which grows as $O((2dn^{d-1})^2)$, for large $n$. This naturally precludes the unique determination for graph structures, due to excessive connectivity (i.e. too many edge connections). 

In this section, we identify several fundamental properties of lattice graphs that ensure the unique determination of the conductivity $\gamma$. For each integer \( t \), consider the subgraph
\[
G'^{(t)} = \bigl(E'^{(t)},\, L_{t-1}^\mathcal{S},\, J_{t-1}^\mathcal{S} \cup L_t \bigr),
\]
where the edge set \( E'^{(t)} \) consists of all edges connecting nodes in \( L_{t-1}^\mathcal{S} \) to nodes in $ D\cup \partial D$. Associated to the subgraph \( G'^{(t)} \), there exists a DtN map \(\Lambda_\gamma'\), acting on the boundary potentials supported on $J_{t-1}^\mathcal{S}\cup L_t$.
Then we define the operator
\begin{equation}
    T_2'^{(t)} := \Lambda_\gamma'(J_{t-1}^\mathcal{S}; L_t),\label{eq:def_T2_prime}
\end{equation}
which maps the potentials prescribed on \( L_t \) to the induced currents on \( J_{t-1}^\mathcal{S} \).

The next lemma provides the key tool, which allows controlling the potential in the interior domain $L_{t }^\mathcal S$, using the potential and current both taken on $J_{t-1}^\mathcal S$. {It can be viewed as the Cauchy problem for the discrete Laplacian $\Delta_\gamma$. Analogous to the Cauchy problem for the Laplacian, problem \eqref{eqn:Subgraph_MIX} is expected to be highly ill-conditioned.}
\begin{lemma}\label{Lem:Cond}
Let \( d \leq t \leq dn \) and suppose that \(\gamma > 0\) is a conductivity on the lattice subgraph \( G'^{(t)} \). The mixed boundary condition problem for $\mathbf u\in\mathbb R^{L_{t}^\mathcal S}$,
\begin{equation}\label{eqn:Subgraph_MIX}
\left\{\begin{aligned}
\Delta_\gamma \mathbf{u} &= \mathbf{0} \quad \text{in } L_{t-1}^\mathcal{S}, \\
\mathbf{u} &= \mathbf{0} \quad \text{on } J_{t-1}^\mathcal{S}, \\
D_\gamma \mathbf{u} &= \mathbf{0} \quad \text{on } J_{t-1}^\mathcal{S},
\end{aligned}\right.
\end{equation}
has only the trivial solution, i.e., \(\mathbf{u} = \mathbf{0}\) in $L_t^\mathcal{S}$.
\end{lemma}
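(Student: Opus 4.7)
The plan is to establish $\mathbf{u} \equiv 0$ by a discrete unique continuation argument that propagates the vanishing Cauchy data on $J_{t-1}^{\mathcal S}$ into the bulk layer by layer. The bookkeeping will be a double induction: an outer induction on the level index $s$, running from $d$ up to $t$, with assertion ``$\mathbf u = 0$ on $L_s$''; and within each layer an inner induction on the minimum-coordinate statistic $m(p) := \min_i p_i$.

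First, I would use the Cauchy data to seed the induction. Every boundary node $q \in J_{t-1}^{\mathcal S}$ has a unique interior neighbor $p$ in the ambient grid $G$, so the pair of conditions $\mathbf u_q = 0$ and $(D_\gamma \mathbf u)_q = \gamma_{qp}(\mathbf u_p - \mathbf u_q) = 0$ together with the positivity of $\gamma_{qp}$ yields $\mathbf u_p = 0$. Running $q$ over $K_{t-1}^{\mathcal S-}$ produces $\mathbf u_p = 0$ at every $p \in L_s$ ($d \leq s \leq t$) that has some coordinate equal to $1$, and running $q$ over $K_t^{\mathcal S+}$ (when nonempty) handles nodes with some coordinate equal to $n$. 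This supplies the inner base case $m(p) = 1$.

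The inner inductive step is the heart of the argument. Assuming $\mathbf u = 0$ on all $L_{s'}$ with $s' < s$ and on $\{p' \in L_s : m(p') \le m\}$, pick $p \in L_s$ with $m(p) = m+1$ and an index $i$ with $p_i = m+1$. Set $q := p - \mathbf e_i \in L_{s-1}$, where $\mathbf e_i$ denotes the $i$-th standard basis vector. The outer hypothesis gives $\mathbf u_q = 0$, so $(\Delta_\gamma \mathbf u)_q = 0$ collapses to $\sum_{r \in \mathcal N(q)} \gamma_{qr} \mathbf u_r = 0$. I then verify that every $r \in \mathcal N(q)\setminus\{p\}$ already vanishes: each $q - \mathbf e_j$ lies in $L_{s-2}$ or in $K_{s-2}^- \subseteq J_{t-1}^{\mathcal S}$; each $q + \mathbf e_j$ with $j \neq i$ either lies in $K_s^+ \subseteq K_t^{\mathcal S+} \subseteq J_{t-1}^{\mathcal S}$ when $q_j = n$ (using $s \leq t$), or lies in $L_s$ but retains the $i$-th coordinate at value $m$, so $m(q+\mathbf e_j) = m$ and the inner hypothesis applies. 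The equation thus reduces to $\gamma_{qp} \mathbf u_p = 0$, forcing $\mathbf u_p = 0$.

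The main conceptual obstacle is choosing an inductive order that leaves exactly one unknown in each Laplace equation invoked. A generic $q \in L_{s-1}$ couples all of its forward neighbors in $L_s$ simultaneously, producing an underdetermined linear system if approached naively. The inner induction on $m(p)$ is precisely the device that resolves this: by fixing the index $i$ at a position where $p_i = m(p)$, the coordinate $q_i = m(p)-1$ becomes the unique minimum in $q$, and every sibling $q + \mathbf e_j$ for $j \neq i$ inherits this minimum and thus sits strictly below the current inner level. Closing both inductions yields $\mathbf u \equiv 0$ on $L_t^{\mathcal S}$, which together with $\mathbf u = 0$ on $J_{t-1}^{\mathcal S}$ completes the proof.
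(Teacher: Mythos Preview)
Your argument is correct, but it takes a different and more elaborate route than the paper's. The paper sweeps in a single coordinate direction: it sets $H_l = (J_{t-1}^{\mathcal S}\cup L_t^{\mathcal S})\cap\{x_1=l\}$ and inducts on $l$. For $p\in H_l\setminus\partial D$ the node $q=p-\mathbf e_1$ has \emph{all} neighbours other than $p$ lying in $H_{l-1}^{\mathcal S}$, so the Kirchhoff equation at $q$ immediately gives $u_p=0$; only one induction is needed. Your diagonal sweep on $s=\sum_i x_i$ does not enjoy this localisation, since a generic $q\in L_{s-1}$ has several forward neighbours in $L_s$; you repair this with the inner induction on $m(p)=\min_i p_i$, which is exactly the device that singles out one unknown per equation. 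That inner argument is sound (in particular your observation that $q_i=m(p)-1$ becomes the unique minimum of $q$, forcing $m(q+\mathbf e_j)=m$ for $j\neq i$, is the key step and is correct). So both proofs are valid unique-continuation arguments; the paper's buys simplicity by privileging one coordinate axis, while yours stays aligned with the diagonal slicing used elsewhere in the paper at the cost of a second induction. As a minor remark, your seeding step via $K_t^{\mathcal S+}$ (``nodes with some coordinate equal to $n$'') is superfluous for the inner base case, which only needs $m(p)=1$ and is fully handled by $K_{t-1}^{\mathcal S-}$.
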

\begin{proof}
We prove the assertion \(\mathbf{u} = \mathbf{0}\) in \(L_t^\mathcal{S}\) by mathematical induction. To this end, let
\[
H_l = (J_{t-1}^\mathcal{S} \cup L_t^\mathcal{S}) \cap \{ (x_i)_i : x_1 = l \}, \quad H_l^\mathcal{S} = \bigcup_{i=0}^l H_i.
\]
Let \(\mathbf{e}_i = (0, \ldots, 0, {1}, 0, \ldots, 0) \in \mathbb{R}^d\) be the $i$-th Cartesian coordinate vector. Note that for each \(p \in H_l \setminus \partial D\) with \(l \geq 1\), the node \(q = p - \mathbf{e}_1\) is connected to exactly one node that lies outside \(H_{l-1}^\mathcal{S}\), namely \(p\) itself.
The base case \(\mathbf{u}|_{H_0} = \mathbf{0}\) holds by assumption. Now assume that \(\mathbf{u}|_{H_{l-1}^{\mathcal S}} = \mathbf{0}\). For each \(p \in H_l \setminus \partial D\), let \(q = p - \mathbf{e}_1\). Then there holds
\[
0 = \sum_{s \in \mathcal{N}(q)} \gamma_{sq} (u_s - u_q) = \gamma_{pq} u_p,
\]
which directly implies \(u_p = 0\). Since \(u_p = 0\) for \(p \in H_l \cap \partial D\) by assumption, the induction step is complete.
Hence, \(\mathbf{u} = \mathbf{0}\) in \(L_t^\mathcal{S}\).
\end{proof}

The validity of Lemma \ref{Lem:Cond} relies crucially on the grid graph structure. However, the next lemma is independent of the graph structure. 
\begin{lemma}\label{lem:linear_independence}
For every integer \( t \) with \( d \leq t \leq dn \) and any strictly positive conductivity \(\gamma > 0\), the following three statements are equivalent:
\begin{enumerate}
    \item[{\rm(i)}] Lemma~\ref{Lem:Cond} holds.
    \item[{\rm(ii)}] The set of vectors \(\{\mathbf{v}_p|_{L_{t-1}^\mathcal{S} \cup J_{t-1}^\mathcal{S}} : p \in L_t^\mathcal{S}\}\) is linearly independent, where the vectors \(\mathbf{v}_p\) are defined in \eqref{eqn:vp}.
    \item[{\rm(iii)}] The operator \( T_2'^{(t)} \), defined in \eqref{eq:def_T2_prime}, is injective.
\end{enumerate}
\end{lemma}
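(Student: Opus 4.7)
The plan is to establish the three-way equivalence via the pair (i) $\Leftrightarrow$ (ii) and (i) $\Leftrightarrow$ (iii), making essential use of two geometric facts that follow at once from the layer definitions in \eqref{eq:level_sets}--\eqref{eq:boundary_classification}: (a) every $q \in L_{t-1}^{\mathcal S}$ satisfies $\mathcal N(q) \subseteq L_t^{\mathcal S} \cup J_{t-1}^{\mathcal S}$; and (b) every $q \in J_{t-1}^{\mathcal S}$ has exactly one interior neighbour $q' \in L_t^{\mathcal S}$. Fact (a) comes from the step-by-one nature of grid adjacency together with $J_{t-1}^{\mathcal S} = K_{t-1}^{\mathcal S -}\cup K_t^{\mathcal S +}$; fact (b) is immediate from the definition of $\partial D$.

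For (i) $\Leftrightarrow$ (ii), I would view the restricted vectors $\{\mathbf v_p|_{L_{t-1}^{\mathcal S} \cup J_{t-1}^{\mathcal S}}\}_{p \in L_t^{\mathcal S}}$ as the columns of a matrix $M$, so that linear independence is precisely injectivity of $c \mapsto Mc$. Given $c=(c_p)_{p\in L_t^{\mathcal S}}$, extend it to $\mathbf u \in \mathbb R^{L_t^{\mathcal S}\cup J_{t-1}^{\mathcal S}}$ by $\mathbf u|_{L_t^{\mathcal S}} = c$ and $\mathbf u|_{J_{t-1}^{\mathcal S}} = 0$. Using $\gamma_{pq}=\gamma_{qp}$ and facts (a)--(b), an entrywise computation should yield
\[
(Mc)_q = (\Delta_\gamma \mathbf u)_q \quad \text{for } q \in L_{t-1}^{\mathcal S}, \qquad (Mc)_q = \gamma_{qq'} c_{q'} = (D_\gamma \mathbf u)_q \quad \text{for } q \in J_{t-1}^{\mathcal S}.
\]
Hence $Mc = 0$ is equivalent to $\mathbf u$ solving the mixed BVP \eqref{eqn:Subgraph_MIX}, and $c\neq 0$ corresponds exactly to $\mathbf u\neq 0$ (since $\mathbf u|_{L_t^{\mathcal S}} = c$). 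The stated equivalence follows.

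For (i) $\Rightarrow$ (iii), given $\varphi \in \ker T_2'^{(t)}$, let $\mathbf u$ be the unique Dirichlet solution on $G'^{(t)}$ with data $\varphi$ on $L_t$ and $0$ on $J_{t-1}^{\mathcal S}$. Then $D_\gamma \mathbf u|_{J_{t-1}^{\mathcal S}} = T_2'^{(t)}\varphi = 0$, so $\mathbf u$ solves \eqref{eqn:Subgraph_MIX}, and (i) forces $\mathbf u \equiv 0$ and in particular $\varphi = 0$. Conversely for (iii) $\Rightarrow$ (i), any solution $\mathbf u$ of \eqref{eqn:Subgraph_MIX} satisfies $T_2'^{(t)}(\mathbf u|_{L_t}) = D_\gamma \mathbf u|_{J_{t-1}^{\mathcal S}} = 0$, so injectivity of $T_2'^{(t)}$ yields $\mathbf u|_{L_t} = 0$; combined with $\mathbf u|_{J_{t-1}^{\mathcal S}} = 0$ and $\Delta_\gamma \mathbf u = 0$ in $L_{t-1}^{\mathcal S}$, unique solvability of the homogeneous Dirichlet problem on $G'^{(t)}$ (which holds by \cite[Remark 3]{CurtisMorrow:1990}) gives $\mathbf u \equiv 0$.

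The only delicate point is the row-by-row identification underlying (i) $\Leftrightarrow$ (ii): restricting $\mathbf v_p$ to $L_{t-1}^{\mathcal S}\cup J_{t-1}^{\mathcal S}$ truncates the contributions of $\mathbf v_p$ at nodes in $L_{t+1}\cup K_{t+1}^+$ when $p\in L_t$, and one must check that this truncation is harmless. Fact (a) guarantees that no truncated node is a neighbour of any $q\in L_{t-1}^{\mathcal S}$, so the interior Laplacian row is reconstructed exactly, while fact (b) ensures that at $q \in J_{t-1}^{\mathcal S}$ only the single column indexed by its interior neighbour $q'$ contributes, so the Neumann row reduces to a scalar multiple of $c_{q'}$. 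This combinatorial accounting is the precise place where the lattice structure enters the argument.
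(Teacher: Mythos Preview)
Your proposal is correct and follows essentially the same route as the paper: both arguments identify the linear dependence relation in (ii) with the mixed boundary value problem of (i) by interpreting the coefficient vector $c$ as a potential on $L_t^{\mathcal S}$ extended by zero on $J_{t-1}^{\mathcal S}$, and both treat (i) $\Leftrightarrow$ (iii) by passing through the Dirichlet solution on $G'^{(t)}$ and invoking its uniqueness. Your version is more explicit about the entrywise computation and the geometric facts (a)--(b) that justify it, whereas the paper compresses the identification into a single sentence, but the logical content is the same.
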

\begin{proof}
\textbf{(i) \(\Leftrightarrow\) (ii):}  
The statement (ii) asserts that
\begin{equation}\label{eqn:c-v}
\sum_{p \in L_t^\mathcal{S}} c_p \mathbf{v}_p \big|_{L_{t-1}^\mathcal{S} \cup J_{t-1}^\mathcal{S}} = \mathbf{0} \quad \implies \quad \mathbf{c} = \mathbf{0}.
\end{equation}
By interpreting the vector \(\mathbf{c}\) as a potential supported on \(G'^{(t)}\) with the zero boundary condition \(\mathbf{c}|_{J_{t-1}^\mathcal{S}} = \mathbf{0}\), the relation \eqref{eqn:c-v} is equivalent to the uniqueness in the statement (i):
\begin{equation*}
\left\{\begin{aligned}
\Delta_\gamma \mathbf{c} &= \mathbf{0} \quad \text{in } L_{t-1}^\mathcal{S}, \\
\mathbf{c} &= \mathbf{0} \quad \text{on } J_{t-1}^\mathcal{S}, \\
D_\gamma \mathbf{c} &= \mathbf{0} \quad \text{on } J_{t-1}^\mathcal{S},
\end{aligned}\right.
\quad \implies \quad \mathbf{c} = \mathbf{0}.
\end{equation*}
\medskip
\noindent\textbf{(i) \(\Leftrightarrow\) (iii):}  
Now suppose that the statement (iii) holds.  Consider the problem
\begin{equation*}
\left\{\begin{aligned}
\Delta_\gamma \mathbf{u} &= \mathbf{0} \quad \text{in } L_{t-1}^\mathcal{S}, \\
\mathbf{u} &= \mathbf{0} \quad \text{on } J_{t-1}^\mathcal{S}, \\
D_\gamma \mathbf{u} &= \mathbf{0} \quad \text{on } J_{t-1}^\mathcal{S}.
\end{aligned}\right.
\end{equation*}
Since \(D_\gamma \mathbf{u} = \mathbf{0}\) on \(J_{t-1}^\mathcal{S}\), it follows from the injectivity of \( T_2'^{(t)} \) that \(\mathbf{u} = \mathbf{0}\) on \(L_t \). Hence, \(\mathbf{u}\) vanishes on the entire boundary of the subgraph \(G'^{(t)}\). By the uniqueness of the discrete Dirichlet problem  {on \(G'^{(t)}\)}, we conclude \(\mathbf{u} = \mathbf{0}\) on \(L_{t}^\mathcal{S}\).
Conversely, suppose that the statement (i) holds. If \(T_2'^{(t)} \mathbf{x} = \mathbf{0}\), then the solution \(\mathbf{u}\) to
\[
\left\{\begin{aligned}
\Delta_\gamma \mathbf{u} &= \mathbf{0} \quad\text{in } L_{t-1}^\mathcal{S}, \\
\mathbf{u} &= \mathbf{0} \quad \text{on } J_{t-1}^\mathcal{S}, \\
\mathbf{u} &= \mathbf{x} \quad \text{on } L_t ,
\end{aligned}\right.
\]
also satisfies
\[
\left\{\begin{aligned}
\Delta_\gamma \mathbf{u} &= \mathbf{0} \quad \text{in } L_{t-1}^\mathcal{S}, \\
\mathbf{u} &= \mathbf{0} \quad \text{on } J_{t-1}^\mathcal{S}, \\
D_\gamma \mathbf{u} &={T_2'^{(t)}\mathbf x} = \mathbf{0} \quad\text{on } J_{t-1}^\mathcal{S}.
\end{aligned}\right.
\]
By (i), this implies \(\mathbf{u} = \mathbf{0}\)  {on $L_t^\mathcal S$}, and thus \(\mathbf{x} = \mathbf{0}\).
\end{proof}

Now we focus on lattice graphs and provide several key properties. The first one summarizes the preceding arguments.

\begin{proposition}[Property 1: Injectivity]\label{prop:T2_injective}
For every \( t \), the operators \( T_2^{(t)} \) and \( T_2'^{(t)} \) are injective.
\end{proposition}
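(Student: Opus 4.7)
The proposition contains two claims. For $T_2'^{(t)}$, the plan is simply to string together existing pieces: Lemma \ref{Lem:Cond} verifies statement (i) of Lemma \ref{lem:linear_independence}, and the equivalence (i) $\Leftrightarrow$ (iii) in that lemma then forces $T_2'^{(t)}$ to be injective. No new work is required for this half.

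For $T_2^{(t)}$, which is attached to the upper subgraph $G^{(t)}$ rather than the lower pyramid $G'^{(t)}$, the plan is a reflection argument reducing the question to the first case. Let $\rho(x_1, \ldots, x_d) = (n+1-x_1, \ldots, n+1-x_d)$, which sends a point with coordinate sum $\sigma$ to one with coordinate sum $d(n+1)-\sigma$ and swaps $K^+$ with $K^-$. Setting $s := d(n+1) - (t+1) = dn+d-t-1$, a direct check from the definitions in \eqref{eq:level_sets}--\eqref{eq:boundary_classification} gives $\rho(L_{t+1}) = L_s$, $\rho(D \setminus L_{t+1}^{\mathcal{S}}) = L_{s-1}^{\mathcal{S}}$, and $\rho(\partial D \setminus J_t^{\mathcal{S}}) = J_{s-1}^{\mathcal{S}}$. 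Thus $\rho$ induces a weighted-graph isomorphism of $G^{(t)}$ with conductivity $\gamma$ onto $G'^{(s)}$ with the reflected (still strictly positive) conductivity $\tilde{\gamma}_{\rho(p)\rho(q)} := \gamma_{pq}$. Under this isomorphism the Dirichlet side $L_{t+1}$ of $T_2^{(t)}$ corresponds to $L_s$ and the current side $\partial D \setminus J_t^{\mathcal{S}}$ corresponds to $J_{s-1}^{\mathcal{S}}$; hence $T_2^{(t)}$ is conjugate to $T_2'^{(s)}$ by a permutation of indices and shares its kernel. Since the range $t \in [d-1, dn-1]$ maps to $s \in [d, dn]$, the first part of the argument applies and gives injectivity.

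The main obstacle I anticipate is the bookkeeping around the asymmetric definition $J_t^{\mathcal{S}} = K_t^{\mathcal{S}-} \cup K_{t+1}^{\mathcal{S}+}$, where the $K^{\mathcal{S}-}$ and $K^{\mathcal{S}+}$ pieces are cut at slice indices differing by one. Under $\rho$ these two families swap, and the off-by-one shift has to match exactly the fact that the Dirichlet slice sits at $L_{t+1}$ for $T_2^{(t)}$ but at $L_s$ for $T_2'^{(s)}$. Once this matching is verified, everything else is a routine consequence of Lemmas \ref{Lem:Cond} and \ref{lem:linear_independence}. As a sanity check, I would also verify that no edge of $G^{(t)}$ is lost or duplicated under $\rho$, i.e., that the closed subgraphs carried by $(L_{s-1}^{\mathcal{S}} \cup L_s \cup J_{s-1}^{\mathcal{S}})$ inside $G$ really coincide on both sides; this is immediate from $\rho$ being an automorphism of the ambient hypercubic lattice.
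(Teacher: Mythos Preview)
Your proposal is correct and is exactly what the paper does: the paper's proof invokes Lemmas~\ref{Lem:Cond} and~\ref{lem:linear_independence} for $T_2'^{(t)}$ and then says only ``by symmetry'' for $T_2^{(t)}$. Your reflection $\rho(x)=(n{+}1)\mathbf{1}-x$ with $s=dn+d-t-1$ is precisely the symmetry being alluded to, and your off-by-one bookkeeping for $J_t^{\mathcal S}$ versus $J_{s-1}^{\mathcal S}$ is the correct unpacking of it.
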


\begin{proof}
The injectivity of \( T_2'^{(t)} \) follows directly from Lemmas~\ref{lem:linear_independence} and~\ref{Lem:Cond}. The injectivity of the operator \( T_2^{(t)} \) follows identically.
\end{proof}

Next we introduce a slightly modified graph relative to the corner subgraph $G'^{(t)}$, from which we remove one node $p$ on the boundary $L_t$. Specifically, we define 
\begin{equation}\label{eq:subgraph_Gpp}
G''^{(t)}_p := \bigl( E_p''^{(t)}, \; L_{t-1}^\mathcal{S} \setminus M_p, \; (J_{t-1}^\mathcal{S} \cup L_t \cup M_p) \setminus \{p\} \bigr),
\end{equation}
where the set $
M_p := \mathcal{N}(p) \cap \bigl( J_{t-1}^\mathcal{S} \cup L_{t-1}^\mathcal{S} \bigr)$
consists of the neighbors of \( p \) within the preceding layers, and the edge set $E_p''^{(t)}$ is given by
\[
E_p''^{(t)} := E'^{(t)} \setminus \{ pq : q \in M_p \}.
\]
Note that if the node \( p \) is adjacent to the boundary, isolated boundary nodes may arise. The boundary currents on such isolated nodes are set to  zero. Fig.~\ref{fig:Interfaces} gives a schematic illustration of the removal of the node \( p \) on the slice $t$. { Note that the ``hexagonal'' appearance in Fig. \ref{fig:Interfaces} is due to the visualization from the diagonal angle of the underlying cubic lattice. This remark applies also to Figs. \ref{fig:InterFace} and \ref{fig:Jup} below.}

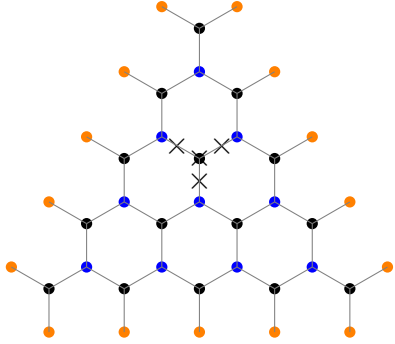
\begin{figure}[htbp]
    \centering
     
    \begin{tikzpicture}[dot/.style={circle, fill=black, inner sep=1.5pt}]
  \def\n{5}
  
  \foreach \i in {1,...,\n} {
    \foreach \j in {1,...,\i} {
      \node[dot] at (-\i/2 + \j - 0.5, -\i * 0.866) {};
      \ifnum \i = 3
        \ifnum \j = 2
           \foreach \dx/\dy in {-0.3/0.17, 0.3/0.17, 0/-0.3, 0/0} {
            \node[black] at (-\i/2 + \j - 0.5 + \dx, -\i * 0.866 + \dy) {\large\bfseries $\times$};
          }
        \fi
      \fi
    };
  \ifnum  \i > 1 
     \foreach \j in {1,...,\numexpr\i -1} {
      \node[dot,blue] at (-\i/2 + \j - 0.5+0.5, -\i * 0.866+0.866/3) {};
      \draw[gray] (-\i/2 + \j - 0.5+0.5, -\i * 0.866+0.866/3)--(-\i/2 + \j - 0.5+0.5, -\i * 0.866+0.866 );
      \draw[gray] (-\i/2 + \j - 0.5+0.5, -\i * 0.866+0.866/3)-- (-\i/2 + \j - 0.5+1, -\i * 0.866);
      \draw[gray] (-\i/2 + \j - 0.5+0.5, -\i * 0.866+0.866/3)-- (-\i/2 + \j - 0.5, -\i * 0.866);
    };
\fi

  \node[dot,orange] at (-\i/2 + \i - 0.5+0.5, -\i * 0.866+0.866/3) {};
   \draw[gray] (-\i/2 + \i - 0.5+0.5, -\i * 0.866+0.866/3) --  (-\i/2 + \i - 0.5 , -\i * 0.866 );
 \node[dot,orange] at (-\i/2 + 1 - 0.5-0.5, -\i * 0.866+0.866/3) {};
   \draw[gray] (-\i/2 + 1 - 0.5-0.5, -\i * 0.866+0.866/3) --  (-\i/2 + 1 - 0.5 , -\i * 0.866 );

  \ifnum \i = \n
        \foreach \j in {1,...,\i} {
            \node[dot, orange] at (-\i/2 + \j - 0.5, -\i * 0.866 - 0.577) {}; 
            \draw[gray]  (-\i/2 + \j - 0.5, -\i * 0.866 - 0.577)  --   (-\i/2 + \j - 0.5, -\i * 0.866 );
        }
    \fi
  }

\end{tikzpicture} 
        \caption{Node deletion in dimension $d=3$. $L_t$ $($blue$)$, $L_{t+1}$ $($black$)$ and $K_t^-$ $($orange$)$ for $t \le n+d-1$. }
    \label{fig:Interfaces}
\end{figure}

The following lemma establishes the unique recovery of the potential $\mathbf{u}$ on the reduced vertex set \( L_t^\mathcal{S} \setminus \{p\} \) from the boundary measurements on \( J_{t-1}^\mathcal{S}\setminus M_p \), i.e., the uniqueness of the interior potential recovery on the graph  \(G''^{(t)}\).

\begin{lemma}[Property 2: Uniqueness on the reduced subgraph]\label{Lem:Cond/dot}.

Let \( d \leq t \leq dn \) and fix any node \( p \in L_t \). Suppose that the conductivity \(\gamma > 0\) is specified on the subgraph \( G_p''^{(t)} \) defined in \eqref{eq:subgraph_Gpp}. Consider the boundary value problem:
\begin{equation}
   \left\{ \begin{aligned}
        \Delta_\gamma \mathbf u &= \mathbf 0 \quad \text{in } L_{t-1}^\mathcal S \setminus M_p,
       \\ \mathbf u &= \mathbf 0 \quad\text{on }J_{t-1}^\mathcal S,\\
       D_\gamma  \mathbf u &= \mathbf 0\quad \text{on }J_{t-1}^\mathcal S\setminus M_p.
    \end{aligned}\right.
\end{equation}
The only solution \(\mathbf{u}\) supported on \( L_t^\mathcal{S} \setminus \{p\} \) is the trivial solution \(\mathbf{u} = \mathbf{0}\).
\end{lemma}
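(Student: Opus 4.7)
The plan is to generalize the slicing induction of Lemma~\ref{Lem:Cond} by running it separately in each of the $d$ coordinate directions and stopping each individual induction one level before it would encounter the removed node $p = (p_1, \ldots, p_d)$.

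For each $k \in \{1, \ldots, d\}$, define the transverse slice $H_l^{(k)} := (J_{t-1}^{\mathcal S} \cup L_t^{\mathcal S}) \cap \{x_k = l\}$ in direct analogy with the $H_l$ of Lemma~\ref{Lem:Cond} (which corresponds to $k=1$). I would induct on $l$: the base case $\mathbf{u}|_{H_0^{(k)}} = 0$ follows since $H_0^{(k)} \subseteq J_{t-1}^{\mathcal S}$, and the inductive step proceeds exactly as in Lemma~\ref{Lem:Cond}, using the Laplacian equation at $q = p'' - \mathbf{e}_k$ when $q$ is interior, or the Neumann condition at $q$ when $q$ sits on the outer boundary $K^-$, to conclude $u_{p''} = 0$ from the induction hypothesis. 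The only obstruction is that the equation at $q$ is absent in $G_p''^{(t)}$ precisely when $q \in M_p$. A direct computation shows $q = p - \mathbf{e}_j$ forces $p'' = p + \mathbf{e}_k - \mathbf{e}_j$, which falls into exactly two cases: $j = k$ at level $l = p_k$, giving $p'' = p$ (vacuous, since $p$ is absent from the vertex set); or $j \neq k$ at level $l = p_k + 1$, producing the genuine obstructions $p + \mathbf{e}_k - \mathbf{e}_j \in L_t$. Terminating the $\mathbf{e}_k$-induction at $l = p_k$ avoids the second case, yielding $\mathbf{u} = 0$ on $\{x_k \leq p_k\} \cap (L_t^{\mathcal S} \setminus \{p\})$.

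Running this argument for each $k = 1, \ldots, d$ and taking the union of the conclusions then gives $\mathbf{u} = 0$ on $\bigcup_{k=1}^d \{x_k \leq p_k\} \cap (L_t^{\mathcal S} \setminus \{p\})$, and I claim this union is all of $L_t^{\mathcal S} \setminus \{p\}$. Indeed, any $q \in L_t^{\mathcal S} \setminus \{p\}$ satisfies $\sum_i q_i \leq t = \sum_i p_i$; if the inequality is strict, some $q_i < p_i$, and if it is an equality then $q \neq p$ still forces some $q_i < p_i$ (otherwise $q_i \geq p_i$ for every $i$ combined with $\sum_i q_i = \sum_i p_i$ would yield $q = p$), so $q \in \{x_i \leq p_i\}$ for some $i$, completing the proof.

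The main subtlety will be verifying rigorously that the $\mathbf{e}_k$-slicing induction remains valid in the modified graph $G_p''^{(t)}$ for every $l \leq p_k$: one must check that for $p'' \neq p$ the predecessor $q$ lies in $L_{t-1}^{\mathcal S} \setminus M_p$, or else in $J_{t-1}^{\mathcal S} \setminus M_p$ with its unique edge to $p'' = q + \mathbf{e}_k$ still present in $E_p''^{(t)}$, that every neighbor of $q$ entering the equation has already been handled in a previous slice, and that no such neighbor coincides with the removed node $p$. All of these reduce to the membership analysis $q \in M_p$ above, so no further obstacle should arise beyond careful bookkeeping.
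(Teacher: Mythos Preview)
Your proposal is correct and follows essentially the same approach as the paper: run the slicing induction of Lemma~\ref{Lem:Cond} in each of the $d$ coordinate directions, stop before the obstruction at $p$, and take the union to cover $L_t^{\mathcal S}\setminus\{p\}$. The paper records the stopping level as the strict inequality $x_i < p_i$ (already sufficient for the union argument), while you carry the induction one level further to $x_k \le p_k$ with a more explicit obstruction analysis; either version works.
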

\begin{proof}
Let the deleted node be located at \(p = (p_i)_{i=1}^d \). We can recover the potential \(\mathbf{u}\) in the region \(\{ (x_i)_{i=1}^d \in D : x_1 < p_1 \}\) using the argument of Lemma \ref{Lem:Cond}. By changing the inductive direction from $\mathbf e_1$ to $\mathbf e_j$ with $j >1$, \(\mathbf{u}\) can also be recovered in the region
\[
\{ (x_i)_{i=1}^d : \exists i \in \{1, \ldots, d\} \text{ such that } x_i < p_i \}.
\]
Since these elements form the set \(L_t^\mathcal{S} \setminus \{p\}\), we complete the recovery.
\end{proof}

This result yields an important relation among the vectors \(\mathbf{v}_q\) associated with the nodes in the layers \( L_t^\mathcal{S} \), given in the following corollary.
\begin{corollary}\label{Cor:w=av}
The vectors \(\mathbf{v}_q\) defined in \eqref{eqn:vp} satisfy the following properties.
\begin{enumerate}
    \item[{\rm(i)}] The collection of vectors \(\{\mathbf{v}_q|_{(L_{t-1}^\mathcal{S} \cup J_{t-1}^\mathcal S) \setminus M_p}\}_{q \in L_t^\mathcal{S} \setminus \{p\}}\) is linearly independent.
    \item[{\rm(ii)}] If \(\mathbf{w} \in \operatorname{span}\{\mathbf{v}_q|_{L_{t-1}^\mathcal{S} \cup J_{t-1}^\mathcal S }\}_{q \in L_{t}^\mathcal{S}}\) is supported on \( M_p \), then there exists \(\alpha \in \mathbb{R}\) such that
    \[
    \mathbf{w} = \alpha \mathbf{v}_p|_{L_{t-1}^\mathcal{S}\cup J_{t-1}^\mathcal S }.
    \]
\end{enumerate}
 \end{corollary}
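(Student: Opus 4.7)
The two claims are best viewed as the linear-algebraic transcription of the mixed-problem uniqueness on the reduced subgraph $G''^{(t)}_p$ (Lemma~\ref{Lem:Cond/dot}), obtained via the same identification of coefficient vectors with potentials that underlies Lemma~\ref{lem:linear_independence}. Concretely, (i) plays for $G''^{(t)}_p$ the same role that Lemma~\ref{lem:linear_independence}(ii) plays for $G'^{(t)}$, while (ii) falls out of (i) after one structural observation about the support of $\mathbf{v}_p$.

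To prove (i), I would suppose $\sum_{q\in L_t^\mathcal{S}\setminus\{p\}} c_q\,\mathbf{v}_q|_{L_{t-1}^\mathcal{S}\setminus M_p}=\mathbf{0}$ and interpret $\mathbf{c}$ as a potential on the vertex set of $G''^{(t)}_p$, extended by zero at $p$ and throughout $J_{t-1}^\mathcal{S}$. Reading the hypothesis componentwise at each $r\in L_{t-1}^\mathcal{S}\setminus M_p$, and using the symmetry $\gamma_{qr}=\gamma_{rq}$ together with the fact that $p\notin\mathcal{N}(r)$ (which holds precisely because $r\notin M_p$, so no deleted edge of $E_p''^{(t)}$ is incident to $r$), each equation collapses to $(\Delta_\gamma\mathbf{c})_r=0$. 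Combined with the zero Dirichlet data $\mathbf{c}|_{J_{t-1}^\mathcal{S}}=\mathbf{0}$ built into the extension and the homogeneous Neumann condition $D_\gamma\mathbf{c}|_{J_{t-1}^\mathcal{S}}=\mathbf{0}$ extracted from the boundary rows, $\mathbf{c}$ then satisfies exactly the hypothesis of Lemma~\ref{Lem:Cond/dot}, which forces $\mathbf{c}=\mathbf{0}$ and so yields linear independence.

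For (ii), I would fix any decomposition $\mathbf{w}=\sum_{q\in L_t^\mathcal{S}} c_q\,\mathbf{v}_q|_{L_{t-1}^\mathcal{S}}$ and restrict both sides further to $L_{t-1}^\mathcal{S}\setminus M_p$, where the left-hand side vanishes by the support hypothesis. The decisive structural fact is that $\mathbf{v}_p|_{L_{t-1}^\mathcal{S}\setminus M_p}=\mathbf{0}$: the support of $\mathbf{v}_p$ is $\{p\}\cup\mathcal{N}(p)$, and its intersection with $L_{t-1}^\mathcal{S}$ lies entirely inside $M_p$ by the very definition of $M_p$. Hence the $q=p$ term disappears from the restricted identity, leaving $\sum_{q\in L_t^\mathcal{S}\setminus\{p\}} c_q\,\mathbf{v}_q|_{L_{t-1}^\mathcal{S}\setminus M_p}=\mathbf{0}$, and part (i) immediately yields $c_q=0$ for every $q\neq p$. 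The claimed identity $\mathbf{w}=c_p\,\mathbf{v}_p|_{L_{t-1}^\mathcal{S}}$ then holds with $\alpha:=c_p$.

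The main obstacle is the Neumann-side bookkeeping inside (i): the assumed vanishing is only on the interior slice $L_{t-1}^\mathcal{S}\setminus M_p$, so producing the boundary condition $D_\gamma\mathbf{c}|_{J_{t-1}^\mathcal{S}}=\mathbf{0}$ needed to apply Lemma~\ref{Lem:Cond/dot} requires tracking, for each $r\in J_{t-1}^\mathcal{S}$ with unique interior neighbour $r'\in L_t^\mathcal{S}$, that the sole contribution $\gamma_{rr'}c_{r'}$ genuinely vanishes---either because $r'=p$ (so $c_{r'}=c_p=0$ by construction) or via the row-identification argument implicit in the $(\mathrm{i})\Leftrightarrow(\mathrm{ii})$ step of Lemma~\ref{lem:linear_independence}. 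I expect this to be the only delicate point; the remainder is a mechanical transfer from the mixed-problem lemma.
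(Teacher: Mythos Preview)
Your approach matches the paper's exactly: part (i) is the linear-algebraic transcription of Lemma~\ref{Lem:Cond/dot} via the $(\mathrm{i})\Leftrightarrow(\mathrm{ii})$ mechanism of Lemma~\ref{lem:linear_independence}, and part (ii) follows by restricting further, dropping the $q=p$ term, and invoking (i)---precisely what the paper does.

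The obstacle you flag in your last paragraph is real, and it is not a gap in your argument but a typo in the stated corollary. The restriction domain in (i) omits $J_{t-1}^\mathcal{S}$, so the homogeneous Neumann data that Lemma~\ref{Lem:Cond/dot} requires is simply not encoded in the hypothesis; your proposed workaround (``via the row-identification argument'') cannot succeed because there is no row at $r\in J_{t-1}^\mathcal{S}$ to identify. Indeed the literal statement is false: for $d=2$, $n=3$, $t=3$, $p=(1,2)$ one has $L_{t-1}^\mathcal{S}\setminus M_p=\emptyset$ while $|L_t^\mathcal{S}\setminus\{p\}|=2$. The intended restriction domain is $(L_{t-1}^\mathcal{S}\cup J_{t-1}^\mathcal{S})\setminus M_p$, in direct parallel with Lemma~\ref{lem:linear_independence}(ii) and consistent with how the corollary is actually applied through Lemma~\ref{lem:CHAR} in Theorem~\ref{thm:uniqueness}. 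With that reading, the component at each $r\in J_{t-1}^\mathcal{S}\setminus M_p$ gives $\gamma_{rr'}c_{r'}=0$ directly, the Neumann condition is immediate, and your sketch for (i) goes through cleanly. Your argument for (ii) is correct as written and identical to the paper's.
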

\begin{proof}
The assertion (i) follows directly from the argument for the equivalence \textbf{(i)} \(\Leftrightarrow\) \textbf{(ii)} in Lemma~\ref{lem:linear_independence} and Lemma~\ref{Lem:Cond/dot}. To prove assertion (ii), suppose that the vector \(\mathbf{w} = \sum_{q \in L_{t}^\mathcal{S}} c_q \mathbf{v}_q|_{L_{t-1}^\mathcal{S} \cup J_{t-1}^\mathcal S}\) is supported on \( M_p \). By restricting $\mathbf{w}$ to \(( L_{t-1}^\mathcal{S}\cup J_{t-1}^\mathcal S 
)\setminus M_p \), we have
\[
\mathbf{0} = \sum_{q \in L_t^\mathcal{S} \setminus \{p\}} c_q \mathbf{v}_q|_{(L_{t-1}^\mathcal{S} \cup J_{t-1}^\mathcal S)\setminus M_p}.
\]
By part (i), it follows that all coefficients \( c_q = 0 \) for \( q \neq p \). Hence, \(\mathbf{w} = \alpha \mathbf{v}_p|_{L_{t-1}^\mathcal{S} \cup J_{t-1}^\mathcal S }\) for some \(\alpha \in \mathbb{R}\).
\end{proof}

Finally, we state an elementary observation on interface connectivity. 
\begin{lemma}[Property 3: Interface connectivity]\label{lem:interface_connectivity}
The neighborhood $\mathcal{N}(L_t \cup K_t)$ of the set \( L_t \cup K_t \) satisfies
\[
\mathcal{N}(L_t \cup K_t) \subseteq L_{t-1} \cup K_{t-1} \cup L_{t+1} \cup K_{t+1}.
\]
In particular, there are no edges connecting nodes within \( L_t \cup K_t \) itself.
\end{lemma}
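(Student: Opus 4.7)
The plan is to reduce both assertions to a single algebraic observation about the coordinate-sum function $\sigma(x) := \sum_{i=1}^{d} x_i$. By the definition of the edge set $E$ in the grid graph, every edge $\{p,q\} \in E$ satisfies $\|p-q\|_{\ell^1} = 1$, so $q = p \pm \mathbf{e}_i$ for some axis direction $i \in \{1,\ldots,d\}$. Consequently $\sigma$ is a graph-theoretic height function that changes by exactly $\pm 1$ across every edge.

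To prove the inclusion $\mathcal{N}(L_t \cup K_t) \subseteq L_{t-1} \cup K_{t-1} \cup L_{t+1} \cup K_{t+1}$, I would pick an arbitrary $p \in L_t \cup K_t$ and an arbitrary neighbor $q \in \mathcal{N}(p)$. From the observation above, $\sigma(q) = \sigma(p) \pm 1 = t \pm 1$. Since $q$ is a vertex of the graph, $q \in D \cup \partial D$, and then matching $\sigma(q)$ with the definitions in \eqref{eq:level_sets} places $q$ in $L_{t\pm 1}$ if $q \in D$ or in $K_{t\pm 1}$ if $q \in \partial D$, which is exactly the claimed inclusion.

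For the second assertion, I would argue by contradiction. Suppose some edge $\{p,q\} \in E$ had both endpoints in $L_t \cup K_t$. Then $\sigma(p) = \sigma(q) = t$, giving $\sigma(q) - \sigma(p) = 0$, but the edge constraint forces $\sigma(q) - \sigma(p) = \pm 1$, a contradiction. Hence no such edge can exist.

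I do not anticipate any real obstacle here; the lemma is a purely combinatorial consequence of the unit $\ell^1$-step definition of $E$. The only point worth double-checking is that the forbidden-edge rule $\{p,q\} \not\subseteq \partial D$ and the definition $\mathcal{N}(p) = \{q : pq \in E\}$ are consistent with the argument, but since both statements concern only edges that are already in $E$, no edge case arises.
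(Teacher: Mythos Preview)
Your argument is correct and is exactly the natural one: the coordinate-sum $\sigma$ changes by $\pm 1$ across every edge, which immediately gives both the inclusion and the absence of intra-layer edges. The paper itself does not supply a proof of this lemma at all, presenting it only as an ``elementary observation on connectivity,'' so your write-up simply fills in the routine verification that the authors left implicit.
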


\begin{remark}
{ The uniqueness analysis of this work essentially relies on the properties of the lattice graph structure listed in Proposition \ref{prop:T2_injective} and Lemmas \ref{Lem:Cond/dot} and \ref{lem:interface_connectivity}, and the analysis strategy can be extended to slightly more general graphs, e.g., ball shaped bodies cut out of hypercubic lattices. However, the analysis of the case of other graphs, e.g., hexagonal grids, likely requires substantially new techniques and is still to be investigated.}
\end{remark}

\section{Proof of Theorem \ref{thm:main}}\label{Sec:PROOF}
   
We now prove Theorem~\ref{thm:main}, i.e.,  the unique determination of \(\gamma\) from the DtN matrix \(\Lambda_\gamma\). The proof is based on mathematical induction slice by slice. {The proof proceeds in the following two steps. (i) suppose that the conductivity $\gamma$ can be recovered on the edge set $E(L_{t-1}^\mathcal S,\overline{D})$. Then we can determine the solution space $\mathcal U^{(t)}$. (ii) The solution space $\mathcal U^{(t)}$ yields a system of linear equations for $\gamma|_{E(L_{t}^\mathcal S,\overline{D})\setminus E(L_{t-1}^\mathcal S,\overline{D})}$, which is proven to be uniquely solvable in Theorem \ref{thm:uniqueness}. Thus the conductivity $\gamma$ can be uniquely recovered on the enlarged edge set $E(L_{t }^\mathcal S,\overline{D})$.}

\subsection{Extract the solution space from the DtN data $\Lambda_\gamma$}

The only available data for the recovery of the conductivity $\gamma$, aside from the graph structure, is the DtN matrix $\Lambda_\gamma$. In order to recover the conductivity $\gamma$ using the boundary measurements from vectors in the solution space $\mathcal U^{(t)}$, we first prove that the respective boundary potentials in $\ker T_1^{(t)}$ can be directly determined by $\Lambda_\gamma$. In fact, $ \ker T_1^{(t)}=\ker T^{(t)} $ , where $T^{(t)}= \Lambda_\gamma(\partial D \setminus J_t^\mathcal{S}; J_t^\mathcal{S})$ is a submatrix of $\Lambda_\gamma$:

\begin{lemma}\label{lem:Data_to_SolSpace}
For all $d-1\le t \le dn-1$, we have
\[
\ker T_1^{(t)} = \ker T^{(t)}.
\]
\end{lemma}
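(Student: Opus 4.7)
The plan is to combine the factorization established in Proposition~\ref{prop:factorization} with the injectivity of $T_2^{(t)}$ proved in Proposition~\ref{prop:T2_injective}. Both ingredients have already been obtained earlier in the paper, so the proof should be short and essentially a diagram-chase.

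First I would dispose of the easy inclusion $\ker T_1^{(t)} \subseteq \ker T^{(t)}$: if $\varphi \in \ker T_1^{(t)}$, then by Proposition~\ref{prop:factorization}
\[
T^{(t)}\varphi = T_2^{(t)}\bigl(T_1^{(t)}\varphi\bigr) = T_2^{(t)}(\mathbf{0}) = \mathbf{0},
\]
so $\varphi \in \ker T^{(t)}$. This is exactly the inclusion recorded in the remark following Proposition~\ref{prop:factorization}.

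For the reverse inclusion $\ker T^{(t)} \subseteq \ker T_1^{(t)}$, I would take $\varphi \in \ker T^{(t)}$ and apply the factorization in the other direction: we have $T_2^{(t)}(T_1^{(t)}\varphi) = T^{(t)}\varphi = \mathbf{0}$. By Proposition~\ref{prop:T2_injective}, the operator $T_2^{(t)}$ is injective (noting that its definition in \eqref{eq:T2t_operator} and the hypothesis $d-1 \le t \le dn-1$ place $t$ in the range where Proposition~\ref{prop:T2_injective} applies). Therefore $T_1^{(t)}\varphi = \mathbf{0}$, which yields $\varphi \in \ker T_1^{(t)}$.

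The main ``obstacle'' is really just ensuring the prerequisites line up: that the index range $d-1 \le t \le dn-1$ matches the range for which Propositions~\ref{prop:factorization} and \ref{prop:T2_injective} were stated, and that the ambient domains of $T^{(t)}$ and $T_1^{(t)}$ (namely $\mathbb{R}^{J_t^{\mathcal{S}}}$) agree so that equality of kernels, rather than mere bijection of images, follows. No further algebraic work is needed.
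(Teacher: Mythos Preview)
Your proposal is correct and follows exactly the same approach as the paper's own proof: first use the factorization $T^{(t)} = T_2^{(t)} \circ T_1^{(t)}$ from Proposition~\ref{prop:factorization} to obtain $\ker T_1^{(t)} \subseteq \ker T^{(t)}$, and then invoke the injectivity of $T_2^{(t)}$ from Proposition~\ref{prop:T2_injective} for the reverse inclusion.
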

\begin{proof}
By the factorization identity
\[
T^{(t)} = T_2^{(t)} \circ T_1^{(t)}
\]
from Proposition~\ref{prop:factorization}, for any \(\varphi \in \ker T_1^{(t)}\), we have
\[
T^{(t)} \varphi = T_2^{(t)}(T_1^{(t)} \varphi) = T_2^{(t)}(\mathbf{0}) = \mathbf{0}.
\]
This shows the inclusion
$\ker T_1^{(t)} \subseteq \ker T^{(t)}$.
Conversely, Proposition~\ref{prop:T2_injective} guarantees that \(T_2^{(t)}\) is injective. Therefore, for any \(\varphi \in \ker T^{(t)}\), we have
\[
\mathbf{0} = T^{(t)} \varphi = T_2^{(t)}(T_1^{(t)} \varphi),
\]
which implies
$T_1^{(t)} \varphi = \mathbf{0}$.
Hence,
$\ker T^{(t)} \subseteq \ker T_1^{(t)}$.
Combining these two inclusions yields the desired equality
$\ker T_1^{(t)} = \ker T^{(t)}$.
\end{proof}

\subsection{Slice-by-slice reconstruction of conductivity}

In this part we present the proof of Theorem \ref{thm:main}. It is based on mathematical induction along the slices.
The base case begins with the layer that is closest to the origin. Note that $L_d = \{(1,1,\cdots,1)\}$ and $K_{d-1}=\{(1,\cdots,1,\underbrace{0}_{\text{the $i$th index}},1,\cdots,1):i=1,\cdots,d\}$. We begin the reconstruction of the conductivity \(\gamma\) by focusing on the subset of edges connecting the node sets \(K_{d-1}\) and \(L_d\):
\[
E(K_{d-1}, L_d) := \{\, pq \in E : p \in K_{d-1}, \ q \in L_d \,\}.
\]
\begin{lemma}\label{lem:recon-base}
 The DtN matrix $\Lambda_\gamma$ uniquely determines $\gamma$ on \(E(K_{d-1}, L_d)\).
\end{lemma}
\begin{proof}
By Lemma~\ref{lem:CHAR}, the orthogonal complement of 
\[
\mathcal{U}^{(d-1)}|_{K_{d-1}} = \ker T_1^{(d-1)}
\]
is one-dimensional. This complement is spanned by a vector that encodes precisely the conductivity values
\[
(\gamma_{pq})_{q \in K_{d-1}}, \quad \text{with } p = (1,\ldots,1) \in L_d.
\]
Thus the subspace \(\ker T_1^{(d-1)}\) uniquely determines the conductivity  \(\gamma\) on the edge set \(E(K_{d-1}, L_d)\).
\end{proof}

Next we define an edge set \(E^t\) by
\begin{equation}\label{eq:Et_partition}
E^{t} := \bigcup_{i=d-1}^{t} E_i,
\end{equation}
where for each \(i \geq d-1\), the subset
\begin{equation}\label{eq:Ei_definition}
E_i := E(K_i^-, L_{i+1}) \cup E(L_i, K_{i+1}^+) \cup E(L_i, L_{i+1})
\end{equation}
collects all edges connecting nodes in \(K_i \cup L_i\) to those in \(K_{i+1} \cup L_{i+1}\). With $\emptyset$ being the empty set, the interface graph
\begin{equation}\label{eq:Gt_definition}
G_t := \bigl(E_t, \emptyset, L_t \cup J_t \cup L_{t+1}  \bigr)
\end{equation}
thus encodes the connectivity structure relevant to the layers; see Fig.~\ref{fig:InterFace} for a schematic illustration on cubic lattices.

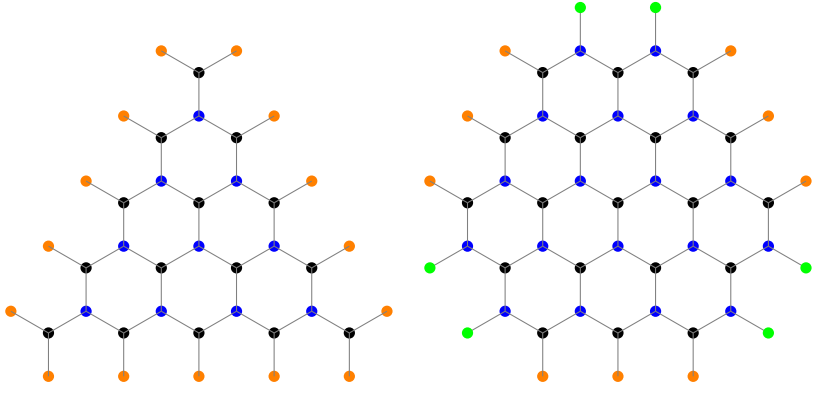
\begin{figure}
    \centering
   \begin{tabular}{cc}
    \begin{tikzpicture}[dot/.style={circle, fill=black, inner sep=1.5pt}]
  \def\n{5}
  
  \foreach \i in {1,...,\n} {
    \foreach \j in {1,...,\i} {
      \node[dot] at (-\i/2 + \j - 0.5, -\i * 0.866) {};
    };
  \ifnum  \i > 1 
     \foreach \j in {1,...,\numexpr\i -1} {
      \node[dot,blue] at (-\i/2 + \j - 0.5+0.5, -\i * 0.866+0.866/3) {};
      \draw[gray] (-\i/2 + \j - 0.5+0.5, -\i * 0.866+0.866/3)--(-\i/2 + \j - 0.5+0.5, -\i * 0.866+0.866 );
      \draw[gray] (-\i/2 + \j - 0.5+0.5, -\i * 0.866+0.866/3)-- (-\i/2 + \j - 0.5+1, -\i * 0.866);
      \draw[gray] (-\i/2 + \j - 0.5+0.5, -\i * 0.866+0.866/3)-- (-\i/2 + \j - 0.5, -\i * 0.866);
    };
\fi

  \node[dot,orange] at (-\i/2 + \i - 0.5+0.5, -\i * 0.866+0.866/3) {};
   \draw[gray] (-\i/2 + \i - 0.5+0.5, -\i * 0.866+0.866/3) --  (-\i/2 + \i - 0.5 , -\i * 0.866 );
 \node[dot,orange] at (-\i/2 + 1 - 0.5-0.5, -\i * 0.866+0.866/3) {};
   \draw[gray] (-\i/2 + 1 - 0.5-0.5, -\i * 0.866+0.866/3) --  (-\i/2 + 1 - 0.5 , -\i * 0.866 );

  \ifnum \i = \n
        \foreach \j in {1,...,\i} {
            \node[dot, orange] at (-\i/2 + \j - 0.5, -\i * 0.866 - 0.577) {}; 
            \draw[gray]  (-\i/2 + \j - 0.5, -\i * 0.866 - 0.577)  --   (-\i/2 + \j - 0.5, -\i * 0.866 );
        }
    \fi
  }

\end{tikzpicture}  &  \begin{tikzpicture}[dot/.style={circle, fill=black, inner sep=1.5pt}]
  \def\n{7}
  \def\cut{2}
  \foreach \i in {1,...,\n} {
    \foreach \j in {1,...,\i} {
      \tikzmath{
       int \ch ;
       \ch =  \n- \i+ \j;
}
\tikzmath{
       int \sh ;
       \sh =  \n+1- \j;
}
      \ifnum \i >2
      \ifnum \ch >2
      \ifnum \sh >2
      \node[dot] at (-\i/2 + \j - 0.5, -\i * 0.866) {};
      \fi
      \fi
      \fi
    };
  \ifnum  \i > 2
     \foreach \j in {1,...,\numexpr\i -1} {
         \tikzmath{
       int \ch ;
       \ch =  \n- \i+ \j;
}
\tikzmath{
       int \sh ;
       \sh =  \n+1- \j;
}
\ifnum \ch >1
\ifnum \sh >2
      \node[dot,blue] at (-\i/2 + \j - 0.5+0.5, -\i * 0.866+0.866/3) {};
      \draw[gray] (-\i/2 + \j - 0.5+0.5, -\i * 0.866+0.866/3)--(-\i/2 + \j - 0.5+0.5, -\i * 0.866+0.866 );
      \draw[gray] (-\i/2 + \j - 0.5+0.5, -\i * 0.866+0.866/3)-- (-\i/2 + \j - 0.5+1, -\i * 0.866);
      \draw[gray] (-\i/2 + \j - 0.5+0.5, -\i * 0.866+0.866/3)-- (-\i/2 + \j - 0.5, -\i * 0.866);
     \ifnum \ch = 2
     \node[dot,green] at (-\i/2 + \j - 0.5, -\i * 0.866) {};
     \fi

    \ifnum \sh =3
     \node[dot,green] at (-\i/2 + \j - 0.5+1, -\i * 0.866) {};
    \fi

\ifnum \i =3
     \node[dot,green] at (-\i/2 + \j - 0.5+0.5, -\i * 0.866+0.866 ) {};
    \fi
     
      \fi
      \fi
    };
\fi
\tikzmath{
int \nuw;
\nuw = \n - 1;
}
\ifnum \i >2
\ifnum \i < \nuw
  \node[dot,orange] at (-\i/2 + \i - 0.5+0.5, -\i * 0.866+0.866/3) {};
   \draw[gray ] (-\i/2 + \i - 0.5+0.5, -\i * 0.866+0.866/3) --  (-\i/2 + \i - 0.5 , -\i * 0.866 );
 \node[dot,orange] at (-\i/2 + 1 - 0.5-0.5, -\i * 0.866+0.866/3) {};
   \draw[gray ] (-\i/2 + 1 - 0.5-0.5, -\i * 0.866+0.866/3) --  (-\i/2 + 1 - 0.5 , -\i * 0.866 );
\fi
\fi
  \ifnum \i = \n
        \foreach \j in {3,...,{\numexpr\i-2}} {
            \node[dot, orange] at (-\i/2 + \j - 0.5, -\i * 0.866 - 0.577) {}; 
            \draw[gray ]  (-\i/2 + \j - 0.5, -\i * 0.866 - 0.577)  --   (-\i/2 + \j - 0.5, -\i * 0.866 );
        }
    \fi
  }

\end{tikzpicture}  
\end{tabular}
    \caption{ $L_t$ $($blue$)$, $L_{t+1}$ $($black$)$, $K_t^-$ $($orange$)$ and $K_{t+1}^+$ $($green$)$ for $t \le n+d-1$ $($left$)$ and $t> n+d-1$ $($right$)$ in the 3D case.}
    \label{fig:InterFace}
\end{figure}
\vspace{0.5em}

\noindent\textit{\textbf{Inductive proof of Theorem \ref{thm:main}:}} Suppose that the conductivity \(\gamma\) on the edges \(E^{t-1}\) is known (this holds true for the base case \(t=d\), cf. Lemma \ref{lem:recon-base}). We impose the boundary potential by $\varphi\in \ker T_1^{(t)}$, and record the resulting boundary current $\psi = \Lambda_\gamma \varphi$. Given the conductivity $\gamma$ on $E^{t-1}$, by Lemma~\ref{Lem:Cond}, we can determine the interior potential distribution $\mathbf u$, which is supported within $L_{t}^\mathcal S$. Then the subvector $\widetilde\gamma:=\gamma|_{E_t}$ should satisfy
$$\begin{aligned}
    \sum_{q\in\mathcal N(p)}\gamma_{pq}(u_p-u_q)&=0,\quad\text{for }p\in L_t\cup L_{t+1},\\
    \sum_{q\in\mathcal N(p)}\gamma_{pq}(u_p-u_q)&=\psi_p,\quad\text{ for }p\in J_t.
\end{aligned}$$
These relations can be written into {a system of linear equations of the only unknown $\widetilde\gamma$:} 
\begin{equation}
    \mathbf J_p^\mathbf u\cdot \widetilde\gamma =C_p, \quad \forall p\in L_t\cup L_{t+1}\cup J_t {\text{ and }\mathbf u\in \mathcal U^{(t)},}\label{eqn:JGACP}
\end{equation}
where $C_p=\psi_p$ for $p\in J_t$ and $C_p=0$ otherwise, and for each edge \(e = pq\) in $E_t$, the vector \(\mathbf{J}_p^\mathbf{u}\in\mathbb R^{E_t}\) is defined componentwise by (see Fig. \ref{fig:Jup})
\begin{equation}
    \mathbf{J}_p^\mathbf{u}(e) = 
\begin{cases}
\mathbf{u}_p - \mathbf{u}_q, & \text{if } e = pq, \\
0, & \text{otherwise}.
\end{cases}\label{eqn:JPU}
\end{equation}
The existence of a solution to the system \eqref{eqn:JGACP} is always ensured (by the assumption that it is realized by the true conductivity), and the solutions necessarily contain the true conductivity that generates the DtN matrix $\Lambda_\gamma$. Furthermore, the next theorem gives the uniqueness of the solutions and thereby gives the reconstruction of a single-layer conductivity.
\begin{theorem}
\label{thm:uniqueness}
In $\mathbb R^{E_t}$, there holds 
\begin{equation}\label{eqn:span-J-orth}
\operatorname{span}\left(\left\{ \mathbf J_p^\mathbf u : \mathbf{u} \in \mathcal{U}^{(t)},p\in L_t\cup L_{t+1}\cup J_t\right\}\right)^\perp = \{\mathbf0\}.
\end{equation}
\end{theorem}
\begin{proof}
We first investigate the space  \(\operatorname{span}\{\mathbf{J}_p^\mathbf{u} : \mathbf{u} \in \mathcal{U}^{(t)}\}\) for each node \(p\in L_t\cup L_{t+1}\cup J_t\) under the following three cases: (i) \(p \in L_{t+1}\), (ii) \(p \in L_t\) with no connection to \(K_{t+1}^+\), or \(p \in K_t^-\), and (iii) \(p \in L_t\) connected to \(q \in K_{t+1}^+\). Below we analyze the three cases separately.

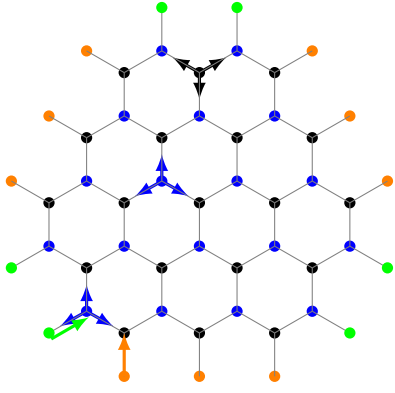
\begin{figure}[hbt!]
    \centering
    \begin{tikzpicture}[dot/.style={circle, fill=black, inner sep=1.5pt}]
  \def\n{7}
  \def\cut{2}
  \foreach \i in {1,...,\n} {
    \foreach \j in {1,...,\i} {
      \tikzmath{
       int \ch ;
       \ch =  \n- \i+ \j;
}
\tikzmath{
       int \sh ;
       \sh =  \n+1- \j;
}
      \ifnum \i >2
      \ifnum \ch >2
      \ifnum \sh >2
      \node[dot] at (-\i/2 + \j - 0.5, -\i * 0.866) {};
      \fi
      \fi
      \fi
      \ifnum \i = 3
        \ifnum \j = 2
           \foreach \dx/\dy in {-0.3/0.17, 0.3/0.17, 0/-0.3} {
            \draw[->, >=latex,very thick] (-\i/2 + \j - 0.5 , -\i * 0.866  ) -- (-\i/2 + \j - 0.5 + 1.2*\dx, -\i * 0.866 + 1.2*\dy)  ;
          }
        \fi
      \fi
    };
  \ifnum  \i > 2
     \foreach \j in {1,...,\numexpr\i -1} {
         \tikzmath{
       int \ch ;
       \ch =  \n- \i+ \j;
}
\tikzmath{
       int \sh ;
       \sh =  \n+1- \j;
}
\ifnum \ch >1
\ifnum \sh >2
      \node[dot,blue] at (-\i/2 + \j - 0.5+0.5, -\i * 0.866+0.866/3) {};
              
          \ifnum \i = 5
        \ifnum \j = 2
           
               \foreach \dx/\dy in {-0.3/-0.17, 0.3/-0.17, 0/0.3} {
            \draw[->, >=latex,blue,very thick] (-\i/2 + \j - 0.5+0.5,-\i * 0.866+0.866/3  ) -- (-\i/2 + \j - 0.5+0.5+ 1.2*\dx, -\i * 0.866+0.866/3 + 1.2*\dy)  ;
          }
        \fi
      \fi

       \ifnum \i = 7
        \ifnum \j = 2
           
               \foreach \dx/\dy in {-0.3/-0.17, 0.3/-0.17, 0/0.3} {
            \draw[->, >=latex,blue,very thick] (-\i/2 + \j - 0.5+0.5,-\i * 0.866+0.866/3  ) -- (-\i/2 + \j - 0.5+0.5+ 1.2*\dx, -\i * 0.866+0.866/3 + 1.2*\dy)  ;
          }
          \draw[->, >=latex,green,very thick] (-\i/2 + \j - 0.5+0.03,-\i * 0.866-0.05-0.03 )  -- (-\i/2 + \j - 0.5+0.5+0.03,-\i * 0.866+0.866/3 -0.05 -0.03) ;
        \fi
      \fi
      \draw[gray] (-\i/2 + \j - 0.5+0.5, -\i * 0.866+0.866/3)--(-\i/2 + \j - 0.5+0.5, -\i * 0.866+0.866 );
      \draw[gray] (-\i/2 + \j - 0.5+0.5, -\i * 0.866+0.866/3)-- (-\i/2 + \j - 0.5+1, -\i * 0.866);
      \draw[gray] (-\i/2 + \j - 0.5+0.5, -\i * 0.866+0.866/3)-- (-\i/2 + \j - 0.5, -\i * 0.866);
     \ifnum \ch = 2
     \node[dot,green] at (-\i/2 + \j - 0.5, -\i * 0.866) {};
     \fi

    \ifnum \sh =3
     \node[dot,green] at (-\i/2 + \j - 0.5+1, -\i * 0.866) {};
    \fi

\ifnum \i =3
     \node[dot,green] at (-\i/2 + \j - 0.5+0.5, -\i * 0.866+0.866 ) {};
    \fi
     
      \fi
      \fi
    };
\fi
\tikzmath{
int \nuw;
\nuw = \n - 1;
}
\ifnum \i >2
\ifnum \i < \nuw
  \node[dot,orange] at (-\i/2 + \i - 0.5+0.5, -\i * 0.866+0.866/3) {};
   \draw[gray ] (-\i/2 + \i - 0.5+0.5, -\i * 0.866+0.866/3) --  (-\i/2 + \i - 0.5 , -\i * 0.866 );
 \node[dot,orange] at (-\i/2 + 1 - 0.5-0.5, -\i * 0.866+0.866/3) {};
   \draw[gray ] (-\i/2 + 1 - 0.5-0.5, -\i * 0.866+0.866/3) --  (-\i/2 + 1 - 0.5 , -\i * 0.866 );
\fi
\fi
  \ifnum \i = \n
        \foreach \j in {3,...,{\numexpr\i-2}} {
            \node[dot, orange] at (-\i/2 + \j - 0.5, -\i * 0.866 - 0.577) {}; 
            \draw[gray ]  (-\i/2 + \j - 0.5, -\i * 0.866 - 0.577)  --   (-\i/2 + \j - 0.5, -\i * 0.866 );
        };
                    \draw[orange,->,>=latex,very thick ]  (-\i/2 + 3 - 0.5, -\i * 0.866 - 0.577)  --   (-\i/2 + 3 - 0.5, -\i * 0.866 );

    \fi
  }
\end{tikzpicture}  
\caption{A schematic illustration of $\mathbf J_p^\mathbf u$ with different $p$ in the 3D case with $L_t$ {\rm(}blue{\rm)}, $L_{t+1}$ $($black$)$, $K_t^-$ $($orange$)$ and $K_{t+1}^+$ $($green$)$.}
    \label{fig:Jup}
\end{figure}

\medskip
\noindent\textbf{Case (i): \(p \in L_{t+1}\).} Let \(M_p = (L_t \cup J_t) \cap \mathcal{N}(p)\), where the set \(\mathcal{N}(p)\) consists of the neighbors of \(p\). Consider the subspace
\[
\operatorname{span}\left\{ \mathbf{u}|_{M_p} : \mathbf{u} \in \mathcal{U}^{(t)} \right\}.
\]If \(\mathbf{w}\in\mathbb R^{M_p}\) is orthogonal to the subspace $\operatorname{span}\{ \mathbf{u}|_{M_p} : \mathbf{u} \in \mathcal{U}^{(t)}\}$, then by Lemma~\ref{lem:CHAR}, the zero extension, still denoted by  \(\mathbf{w}\in\mathbb R^{L_t^\mathcal S\cup J_{t }^\mathcal S}\), belongs to \(\operatorname{span}\{\mathbf{v}_q|_{L_t^\mathcal S\cup J_{t }^\mathcal S} : q \in L_{t+1}^\mathcal{S}\}\) and is supported in $M_p$. Corollary~\ref{Cor:w=av} implies \(\mathbf{w} = \alpha \mathbf{v}_p|_{L_t^\mathcal S\cup J_{t }^\mathcal S } \) for some \(\alpha \in \mathbb{R}\). Meanwhile, by noting $u_p=0$, we have
\begin{equation}
    \left(\operatorname{span}\{\mathbf{J}_p^\mathbf{u} : \mathbf{u} \in \mathcal{U}^{(t)}\}\right)^\perp = \operatorname{span}\{\bm{\gamma}|_{\mathcal E(p)}\} \times \mathbb{R}^{E _{t}\setminus \mathcal E(p)},\label{eq:orthogonal_complement_p0}
\end{equation}
where \(\mathcal E(p):=E(p,M_p)\) denotes the set of edges incident to $p$.

\medskip
\noindent\textbf{Case (ii): \(p \in L_t\) with no connection to \(K_{t+1}^+\), or \(p \in K_t^-\).} By Lemma~\ref{lem:interface_connectivity}, $\mathcal N(p)\cap ( L_t \cup J_t \cup L_{t+1}) = \mathcal N(p)\cap L_{t+1}$. Furthermore, we have \(\mathbf{u}|_{\mathcal{N}(p)\cap L_{t+1}} = 0\) for any $\mathbf{u}\in \mathcal{U}^{(t)}$. Let \(\mathbf{1}_{\mathcal E(p)}\) denote the indicator vector of $\mathcal{E}(p)$ (i.e., edges incident to \(p\)). Then to prove
\begin{equation}
    \operatorname{span}\{\mathbf{J}_p^\mathbf{u} : \mathbf{u} \in \mathcal{U}^{(t)}\} = \operatorname{span}\{\mathbf{1}_{\mathcal E(p)}\},\label{eq:span_case2}
\end{equation}
it suffices to show that there exists \(\mathbf{u} \in \mathcal{U}^{(t)}\) with \(\mathbf{u}_p \neq 0\). Indeed, if no such \(\mathbf{u}\) exists, then the canonical basis vector \(\mathbf{e}_p\in\mathbb R^{J_{t}^\mathcal S\cup L_{t}^\mathcal S}\) is orthogonal to \(\mathcal{U}^{(t)}\) and naturally supported on some \(M_q\) with \(q \in \mathcal{N}(p) \cap L_{t+1}\). By Corollary~\ref{Cor:w=av}, this implies \(\mathbf{e}_p = \alpha \mathbf{v}_q|_{J_{t}^\mathcal S\cup L_{t}^\mathcal S}\) for some \(\alpha \in \mathbb{R}\). {However, for any positive $\gamma$, the vector $\mathbf{v}_q|_{ J_{t}^\mathcal S\cup L_{t}^\mathcal S}$ contains more than one nonzero entries, whereas $\mathbf{e}_p$ contains only one nonzero entry, leading to a contradiction.}

\medskip
\noindent\textbf{Case (iii): \(p \in L_t\) connected to \(q \in K_{t+1}^+\).} 
By an argument analogous to Case (ii), one can readily verify that
\begin{align}\label{eq:span_case3}
&\operatorname{span}\{\mathbf{J}_p^\mathbf{u} : \mathbf{u} \in \mathcal{U}^{(t)}\} + \operatorname{span}\{\mathbf{J}_q^\mathbf{u} : \mathbf{u} \in \mathcal{U}^{(t)}\} \\
=& \underbrace{\mathbb{R}}_{\text{edge } pq} \times \operatorname{span}\{\mathbf{1}_{\mathcal E(p) \setminus \{pq\}}\} \times \underbrace{\mathbf{0}}_{\text{edges }E_t\backslash \mathcal E(p)},\nonumber
\end{align}
where the notation \(\mathbf{1}_{\mathcal E(p) \setminus \{pq\}}\) denotes the indicator vector on edges incident to \(p\) excluding the edge \(pq\). 

\noindent Now we can prove the desired assertion \eqref{eqn:span-J-orth}. Suppose that \(\mathbf{w}\) lies in the orthogonal complement, i.e.,
\begin{equation}\label{eq:w_in_orthogonal_complement}
\mathbf{w} \in \operatorname{span}\left(\left\{ \mathbf J_p^\mathbf u : \mathbf{u} \in \mathcal{U}^{(t)},p\in L_t\cup L_{t+1}\cup J_t\right\}\right)^\perp .
\end{equation}
We claim \(\mathbf{w} = \mathbf{0}\). Since for any node \(p \in L_{t+1}\), \(\mathbf{w}\) is orthogonal to each \(\mathbf{J}_p^\mathbf{u}\), its restriction $\mathbf{w}|_{\mathcal{E}(p)}$ is proportional to \(\bm{\gamma}|_{\mathcal E(p)}\) as in \eqref{eq:orthogonal_complement_p0}. Hence for any node \(p \in L_{t+1}\), there exists \(c_p \in \mathbb{R}\) such that
\begin{equation}\label{eq:w_restriction}
\mathbf{w}|_{\mathcal E(p)} = c_p \bm{\gamma}|_{\mathcal E(p)}.
\end{equation}
Next, the orthogonality condition \(\mathbf{w} \perp \mathbf{J}_p^\mathbf{u}\) for all \(p \in L_t \cup J_t\), together with the relations \eqref{eq:span_case2} and \eqref{eq:span_case3}, leads to the following system of  linear constraints
\begin{equation}\label{eq:linear_constraint_c}
\sum_{q \in \mathcal{N}(p) \cap L_{t+1}} c_q \gamma_{pq} = 0, \quad \forall p \in L_t \cup K_t^-.
\end{equation}
We extend the vector \(\mathbf{c}\) by setting
$\mathbf{c} = 0$  on  $L_t^\mathcal{S} \cup J_t^\mathcal{S}$,
so that the boundary nodes \(J_t^\mathcal{S}\) have zero potential and zero current.
Then \(\mathbf{c}\) solves the boundary value problem
\[
\left\{\begin{aligned}
\Delta_\gamma \mathbf{c} &= \mathbf{0} \quad \text{in } L_t^\mathcal{S}, \\
\mathbf{c} &= \mathbf0 \quad \text{on } J_t^\mathcal{S}, \\
D_\gamma \mathbf{c} &= \mathbf0 \quad \text{on } J_t^\mathcal{S}.
\end{aligned}\right.
\]
By Lemma~\ref{Lem:Cond}, the only solution is \(\mathbf{c} = 0\) on \(L_{t+1}^\mathcal{S}\). This and the identity \eqref{eq:w_restriction} imply that \(\mathbf{w}\) vanishes on all edges incident to nodes in \(L_{t+1}\), except possibly those incident to nodes in \(K_{t+1}^+\). However, by \eqref{eq:span_case3}, we can finally conclude that \(\mathbf{w} = \mathbf{0}\) everywhere.
This completes the proof of the theorem.
\end{proof}

\begin{theorem}
\label{thm:All}
Given the conductivity $\gamma$ on the edge set $E^{t-1}$, the data pairs
\[
\{ (\varphi, \Lambda_\gamma \varphi) : \varphi \in \ker T_1^{(t)} \}
\]
uniquely determine the conductivity \(\gamma\) on the edge set \(E_t\).
\end{theorem}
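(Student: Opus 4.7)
The plan is to close the inductive scheme set up in this section by using strong induction on the slice index $t$, running from $t=d-1$ up to the maximal slice so that the union $\bigcup_t E_t = E$ is covered layer by layer. The base case $t=d-1$ has already been handled in the text above, recovering $\bm\gamma$ on $E_{d-1}=E(K_{d-1},L_d)$. Assume inductively that $\bm\gamma$ is known on $E^{t-1}=\bigcup_{i=d-1}^{t-1}E_i$; the task is to show that the data pairs $\{(\varphi,\Lambda_\gamma\varphi):\varphi\in\ker T_1^{(t)}\}$ pin down $\bm\gamma$ on $E_t$.

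The first step is to extract, not just the boundary trace $\varphi$, but the full interior potential $\mathbf u = S_\gamma\varphi$ restricted to $L_t^{\mathcal S}$. By Lemma~\ref{lem:Data_to_SolSpace}, $\ker T_1^{(t)}=\ker T^{(t)}$ is readable directly from $\Lambda_\gamma$, and each pair $(\varphi,\Lambda_\gamma\varphi)$ furnishes both the Dirichlet trace $\varphi|_{J_{t-1}^{\mathcal S}}$ and the Neumann trace $(\Lambda_\gamma\varphi)|_{J_{t-1}^{\mathcal S}}$. On the corner subgraph $G'^{(t)}$, whose edges lie entirely in the already-recovered $E^{t-1}$, the potential $\mathbf u$ satisfies the overdetermined mixed system \eqref{eqn:Subgraph_MIX}, and Lemma~\ref{Lem:Cond} says its homogeneous version has only the trivial solution. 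By linearity, $\mathbf u|_{L_t^{\mathcal S}}$ is uniquely determined from the data, and in fact computed constructively by the same layer-by-layer Kirchhoff propagation used in the proof of that lemma. Together with the known $\mathbf u|_{J_t^{\mathcal S}}=\varphi$, this reconstructs every $\mathbf u\in\mathcal U^{(t)}$ explicitly from $\Lambda_\gamma$.

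The second step is purely algebraic. For each $p\in L_t\cup L_{t+1}\cup J_t$, Kirchhoff's law at $p$ gives the linear constraint $\mathbf J_p^{\mathbf u}\cdot\bm\gamma=C_p$ from \eqref{eqn:JGACP}, where $\mathbf J_p^{\mathbf u}\in\mathbb R^{E_t}$ has known entries $\mathbf u_p-\mathbf u_q$ and $C_p$ is computable from the measured currents on $J_t$ together with contributions from the already-known conductivities on $E^{t-1}$. Letting $\varphi$ range over $\ker T_1^{(t)}$ and $p$ over the three families treated in Theorem~\ref{thm:uniqueness}, the vectors $\mathbf J_p^{\mathbf u}$ span all of $\mathbb R^{E_t}$. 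Since the true conductivity is a solution and the span is full, Theorem~\ref{thm:uniqueness} forces this solution to be unique, completing the inductive step and, after iterating, the entire reconstruction.

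The hard part is the first step: the boundary data on $\partial D$ alone pin $\mathbf u$ down only on slices strictly below $L_t$, whereas the linear system for $\bm\gamma|_{E_t}$ needs $\mathbf u$ on the frontier layer $L_t$ itself. Bridging this gap forces the simultaneous use of both Dirichlet and Neumann traces restricted to $J_{t-1}^{\mathcal S}$, and the mechanism making the resulting overdetermined system uniquely solvable in the forward direction is precisely the corner-geometric Lemma~\ref{Lem:Cond}, which fails for general graphs. Once this step is secured, the monotonicity $\ker T_1^{(t-1)}\subseteq\ker T_1^{(t)}$ from Proposition~\ref{prop:monotonicity} guarantees the stage-$t$ data contain everything needed at earlier stages, and the remaining work reduces to the clean spanning statement of Theorem~\ref{thm:uniqueness}.
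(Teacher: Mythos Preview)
Your proposal is correct and follows essentially the same approach as the paper: the base case on $E_{d-1}$, the inductive recovery of $\mathbf u|_{L_t^{\mathcal S}}$ from Cauchy data on $J_{t-1}^{\mathcal S}$ via Lemma~\ref{Lem:Cond}, and the invocation of Theorem~\ref{thm:uniqueness} to pin down $\bm\gamma|_{E_t}$ are exactly what the paper does (the paper's own proof of this theorem is essentially the two-line observation that the solution set of \eqref{eqn:JGACP} is an affine subspace whose direction space is trivial by Theorem~\ref{thm:uniqueness}, with the surrounding inductive scaffolding left to the preceding text). Your explicit use of Proposition~\ref{prop:monotonicity} to argue that the single-$t$ data already subsume the data needed at earlier stages is a welcome clarification that the paper leaves implicit.
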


\begin{proof}
The true conductivity \(\bm\gamma:=\gamma|_{E_t}\) lies in the set 
\begin{equation*} 
\{\bm\gamma\mid\mathbf J_p^\mathbf u\cdot \bm\gamma =C_p  \text{ for all }p\in L_t\cup L_{t+1}\cup J_t,\mathbf u\in\mathcal U^{(t)}\},
\end{equation*}
with $C_p$ from \eqref{eqn:JGACP}. If $\bm\gamma_1$ and $ \bm\gamma_2$ both lie in the set, by Theorem~\ref{thm:uniqueness}, we have
\[
(\gamma_1-\gamma_2)|_{E_t}\in\operatorname{span}\left(\left\{ \mathbf J_p^\mathbf u : \mathbf{u} \in \mathcal{U}^{(t)},p\in L_t\cup L_{t+1}\cup J_t\right\}\right)^\perp  = \{\mathbf0\},
\]
which establishes the desired uniqueness.
\end{proof}

Combining Lemma~\ref{lem:Data_to_SolSpace} and Theorem~\ref{thm:All} completes the proof of Theorem~\ref{thm:main}.\hfill \(\blacksquare\)

\begin{remark}
{ For two-dimensional square lattices, Curtis and Morrow \cite{CurtisMorrow:1991} provided a sharp characterization on a subset of entries of the DtN matrix $\Lambda_\gamma$ that is necessary and sufficient for the recovery of the full DtN matrix $\Lambda_\gamma$, and using this important fact, established the unique determination of the conductivity $\gamma$ by a partial DtN matrix. For multi-dimensional hypercubic lattices, in view of the symmetry of the lattice structure, one naturally expects the existence of a subset of entries  of  the DtN matrix $\Lambda_\gamma$ for the unique recovery of $\Lambda_\gamma$. However, a precise characterization of the minimal subset of $\Lambda_\gamma$ that is necessary and sufficient for the unique recovery is still missing. Note also that the present analysis does not extend directly to the case of partial DtN matrices.}   
\end{remark}

\begin{remark}
{ The discrete Calder\'{o}n problem on cylindrical networks studied by Lam and Pylyavskyy \cite{LamPylyavskyy:2012} exhibits inherent non-uniqueness using the invariance of the DtN matrix under the so-called $Y$-$\Delta$ transformation. The key distinction between the cylindrical setting and hyper-cubic lattices is topological: cylindrical networks reside on a cylinder (or torus), whereas hyper-cubic lattices live on bounded domains in Euclidean spaces. In the Euclidean case, bounded regions possess more boundary nodes than the torus, which leads to different uniqueness results. Cylindrical networks can be reduced to networks in bounded domains in Euclidean spaces by cutting the cylinder along altitude, which gives rise to two families of boundary points along the cuts. The extension of the cylindrical setting to higher dimensions, in which networks are embedded on a higher-dimensional torus, presents intriguing possibilities that deserve further investigations.
}
\end{remark}

\section{Numerical reconstructions and discussions}
\label{Sec:Numerical}

The uniqueness proof lends itself to a slice-by-slice reconstruction algorithm, which can be viewed as a multi-dimensional analogue of the algebraic reconstruction algorithm due to Curtis and Morrow \cite{CurtisMorrow:1991} for square lattices. The detailed reconstruction procedure is listed in Algorithm \ref{alg:calderon}. When recovering the conductivity on the edge set \(E_t\), it employs a basis $\{\varphi_i\}$ of the quotient space \(\ker T^{(t)} / \ker T^{(t-1)}\). Like the Curtis-Morrow algorithm \cite{CurtisMorrow:1991}, each iteration of the algorithm involves only simple algebraic operations. Note that Algorithm~\ref{alg:calderon} starts the reconstruction process from one corner of the hypercubic lattice. However, numerically the reconstruction error grows as the iteration proceeds. The final reconstructions should combine the different solutions obtained by starting from each of the corners, and for each edge in the graph, we select the solution corresponding to the closest corner.

{ We briefly discuss the computational complexity of Algorithm \ref{alg:calderon}, with the notation $O_d(\cdot)$ indicating the involved constant depending on $d$. The main computational cost of the algorithm lies in the following three steps: line 3 (kernel evaluation), line 5 (potential recovery) and line 9 (conductivity update). First, computing the kernel \(K_t\) at line 3 involves \(O_d(n)\) evaluations, each requiring \(O_d((n^{d-1})^3)=O_d(n^{3d-3})\) operations (e.g., via singular value decomposition), leading to an overall complexity  \(O_d(n^{3d-2})\). Second, the potential recovery at line 5 involves solving for the potential with the Cauchy data (i.e., both Dirichlet and Neumann data) specified on the boundary $ J_{t-1}^\mathrm{S}$. This step
requires \(O_d(n)\) solves of (sparse) linear systems of size \(O_d(n^{d})\times O_d(n^{d})\), leading to the total complexity \(O_d(n^{1+3d})\) (by the standard direct solver without exploiting the sparse structure). Alternatively, by exploiting an upper triangular and sparse structure, this step can be reduced to \(O_d(n^{d})\) per solve, with a total complexity \(O_d(n^{1+d})\), but this procedure is numerically less stable. Third, solving for \(\bm\gamma\) in each layer at line 9 involves \(O_d(n)\) solves, each with complexity \(O_d((n^{d-1})^3)=O_d(n^{3d-3})\), resulting in a total complexity \(O_d(n^{3d-2})\). This step can be accelerated in practice by partitioning the large linear system into smaller ones, following the idea in the proof of Theorem \ref{thm:uniqueness}. In the numerical experiments below, the first step is more expensive than the other two steps. Moreover, the computation can be parallelized between different directions.}

\begin{algorithm}[hbt!]
\caption{The reconstruction algorithm for the conductivity $\bm{\gamma}$}
\label{alg:calderon}
\begin{algorithmic}[1]

\Require DtN map $\Lambda_\gamma$, grid size $n$, and dimension $d$.
\Ensure Approximation of the conductivity $\gamma$.

\State Initialize $\gamma \gets \mathbf{0}$.
\For{$t = d-1$ \textbf{to} $d \lceil \frac{n+3}{2} \rceil$}
    \State Compute the kernel
    $K_t \coloneqq \ker \Lambda_\gamma(\partial D \setminus J_t^\mathcal{S}, J_t^\mathcal{S})$, and construct a basis $\{\varphi_i\}_i$.
    \For{each $\varphi_i$ spanning $K_t/K_{t-1}$}
        \State Compute $\mathbf{u}_i \gets$ \Call{Potential}{$\gamma, \varphi_i, \Lambda_\gamma \varphi_i, t$}.
        \State Evaluate $\mathbf{J}_p^{\mathbf{u}_i}$ for all $p \in L_t \cup L_{t+1} \cup J_t$ using \eqref{eqn:JPU}.
        \State Compute $C_p$ according to \eqref{eqn:JGACP}.
    \EndFor
    \State Determine $\gamma|_{E_t}$ by solving the linear system
    \[
    \left\{ \bm{\gamma} : \mathbf{J}_p^{\mathbf{u}} \cdot \bm{\gamma} = C_p, \quad \forall p \in L_t \cup L_{t+1} \cup J_t, \quad\forall \mathbf{u} = \mathbf u_i \right\}.
    \]
\EndFor

\Function{Potential}{$\gamma, \varphi, \psi, t$}
    \State Find $\mathbf{u}$ satisfying
    \[
    \left\{\begin{aligned}
    \Delta_\gamma \mathbf{u} &= \mathbf{0}\quad  \text{in } L_{t-1}^\mathcal{S}, \\
    \mathbf{u} &= \varphi \quad \text{on } J_{t-1}^\mathcal{S}, \\
    D_\gamma \mathbf{u} &= \psi \quad \text{on } J_{t-1}^\mathcal{S}.
    \end{aligned}\right.
    \]
\EndFunction
\end{algorithmic}
\end{algorithm}

We present numerical results for cubic lattices to illustrate the feasibility of the algorithm. The relevant numerical results are shown in Figs.~\ref{fig:Recovered_Cond} and \ref{fig:Recovered_Cond2} for cubic lattices of different sizes $n$. 
The results show that the algorithm indeed can accurately recover the conductivity $\gamma$ from the DtN matrix $\Lambda_\gamma$ (when the lattice size $n$ is not large), which corroborates the proof strategy. For each fixed $n$, the pointwise error increases dramatically towards the center (i.e., the central part has the largest error), exhibiting clearly a depth-dependent resolution: the conductivity on the edges near the corners can be more accurately recovered than the ones that are far away from the corners. Moreover, as the lattice size $n$ increases from 8 to 12, the maximum error increases dramatically, from $10^{-9}$ (for $n=8$) to $10^0$ (for $n=12$), and the dominant error occurs in the central part of the cubic lattice. Nonetheless, the reconstruction accuracy near the corners is still quite satisfactory. Note that the error arises purely from the round-off error of the floating point operations when implementing the algorithm on computers. This issue is a clear manifestation of the severe ill-posed nature of the discrete Calder\'{o}n problem: the problem is expected to be exponentially ill-posed with respect to the lattice size $n$, and the phenomenon is observed also in the 2D square lattice case \cite{CurtisMorrow:1990,beretta2024discrete}. Therefore, in the presence of data noise, to obtain reasonable conductivity reconstructions, incorporating suitable regularization is indispensable. 

{ These numerical results indicate that the discrete Calder\'{o}n problem has at best logarithmic stability estimates. However, a rigorous analysis of the stability issue (with an explicit dependence on the grid size $n$ and dimensionality $d$) seems very challenging, which in the context of Algorithm \ref{alg:calderon} essentially boils down to precisely tracking the condition number of various submatrices (line 5 of the algorithm) arising from the slicing procedure. Numerically, the most unstable step lies at Step 5 of recovering the potential from the Cauchy data; Indeed, similar to the Cauchy problem for the Laplace equation, the problem is expected to be ill-conditioned. Fig. \ref{fig:kappa} presents the variation of the condition number $\kappa_t$ of the submatrices (defined as the  ratio between the largest and the smallest  singular values) arising in the potential determination step. The condition number $\kappa_t$ grows exponentially with $t$, which clearly suggests at best a logarithmic type stability estimate, but for each fixed $t$, $\kappa_t$ is observed to be nearly independent of $n$.}

\begin{figure}[hbt!]
    \centering
    \setlength{\tabcolsep}{0pt}
    \begin{tabular}{ccc}
    \includegraphics[width=0.33\linewidth]{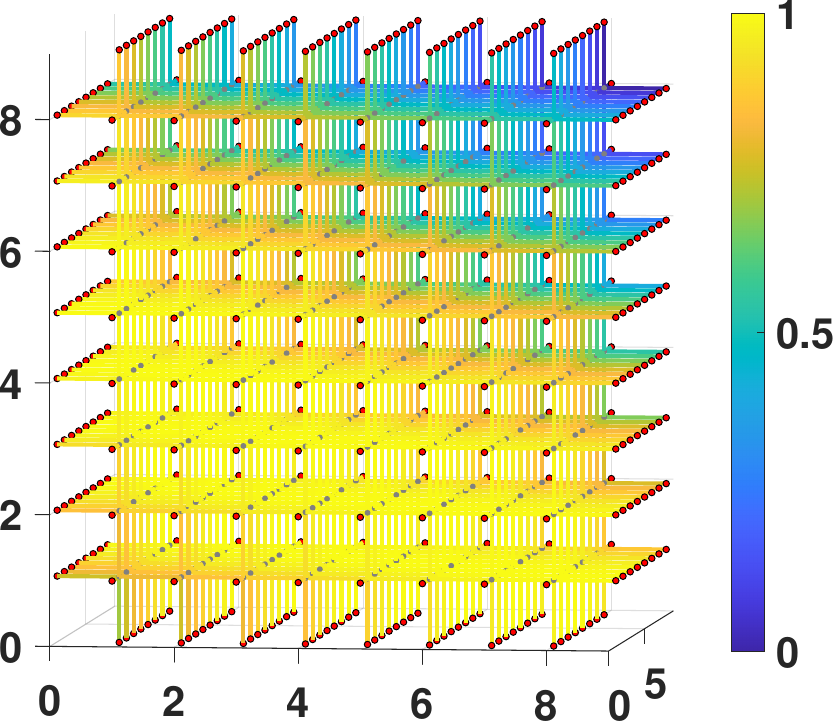}&   \includegraphics[width=0.33\linewidth]{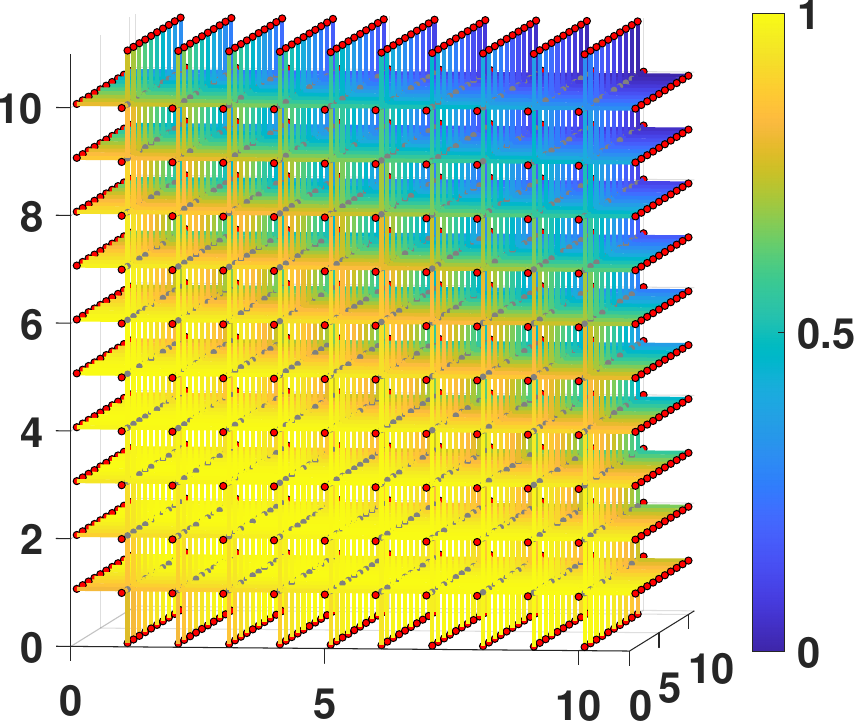}&\includegraphics[width=0.33\linewidth]{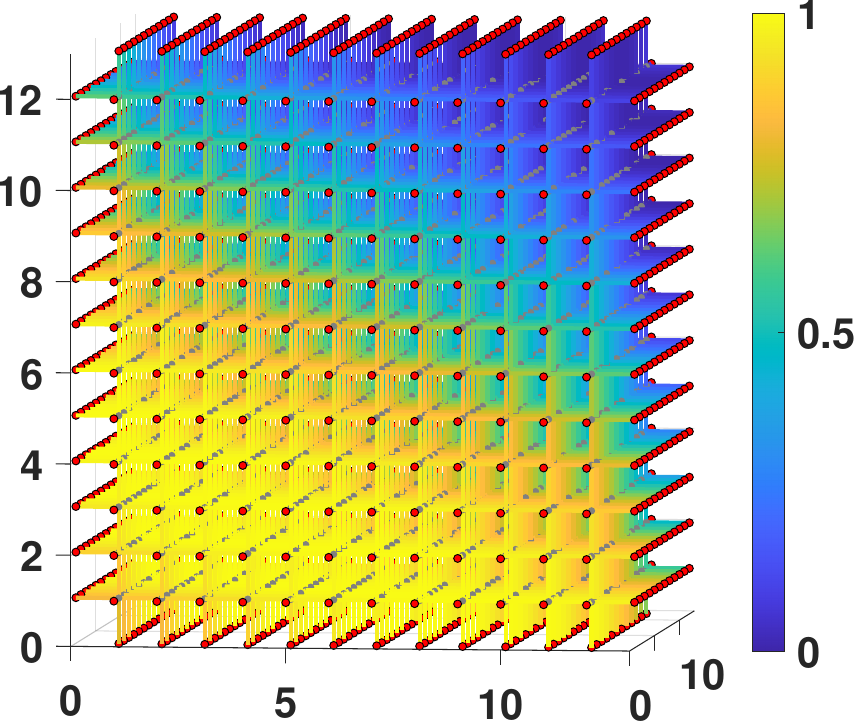}    
\\
  \includegraphics[width=0.33\linewidth]{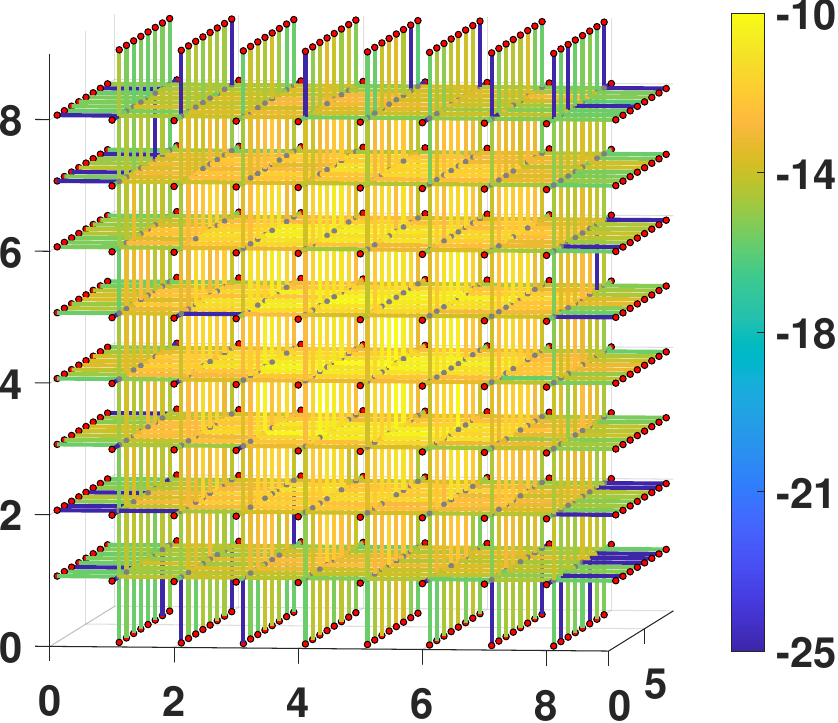}&   \includegraphics[width=0.33\linewidth]{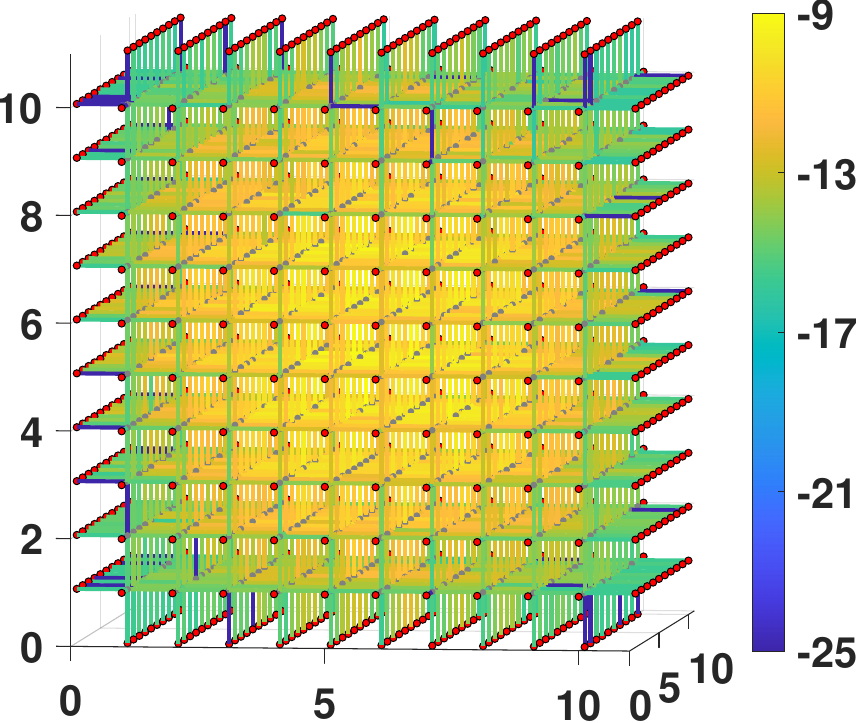}&\includegraphics[width=0.33\linewidth]{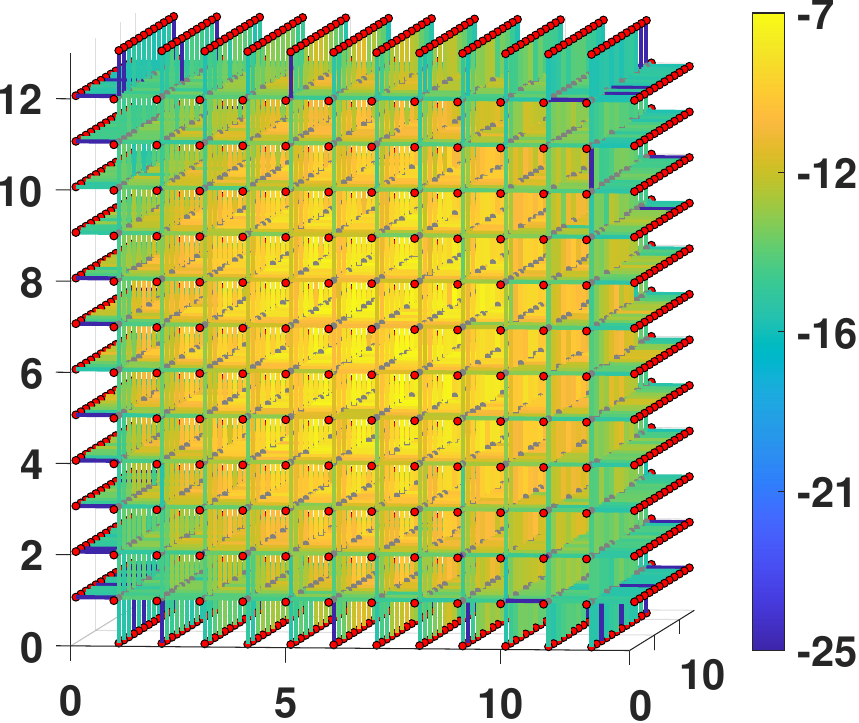}    
\\
$n={8}$&$n={10}$  &$n={12}$ \end{tabular}
    \caption{The exact conductivity $($top$)$ and the logarithm of the pointwise error $\log_{10} |\mathbf{e}_\gamma|$ $($bottom$)$ of the recovered conductivity for the three-dimensional cubic lattice.}
    \label{fig:Recovered_Cond}
\end{figure}

\begin{figure}[hbt!]
    \centering
    \setlength{\tabcolsep}{0pt}
    \begin{tabular}{ccc}
\includegraphics[width=0.33\linewidth]{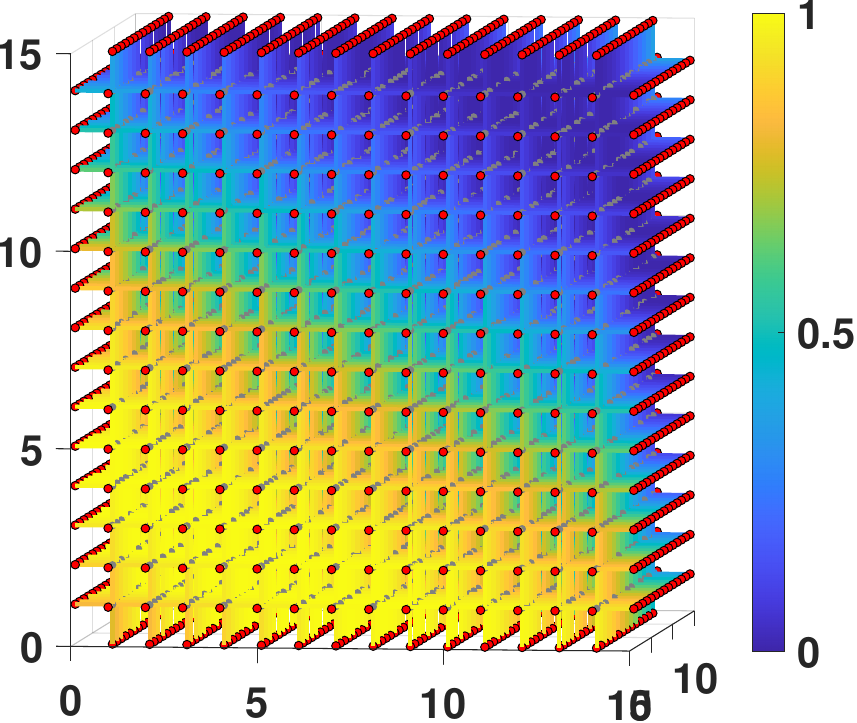}  
&   \includegraphics[width=0.33\linewidth]{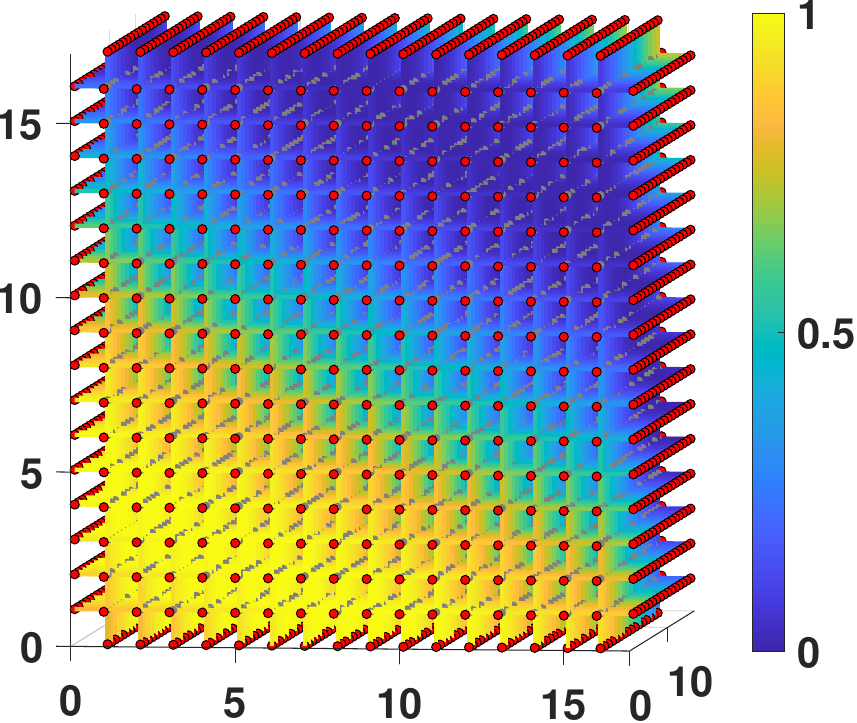}&\includegraphics[width=0.33\linewidth]{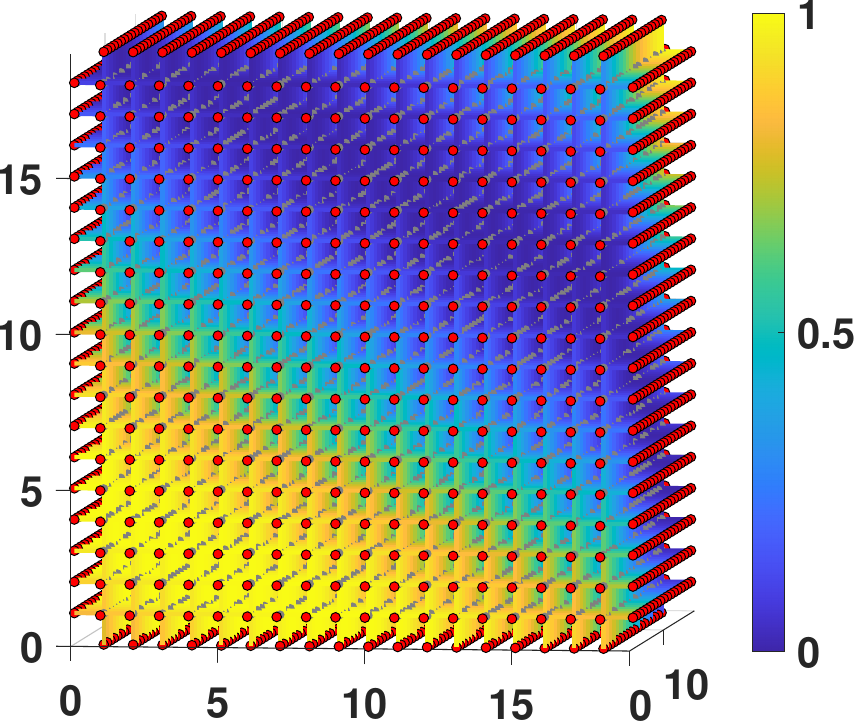}
\\
         \includegraphics[width=0.33\linewidth]{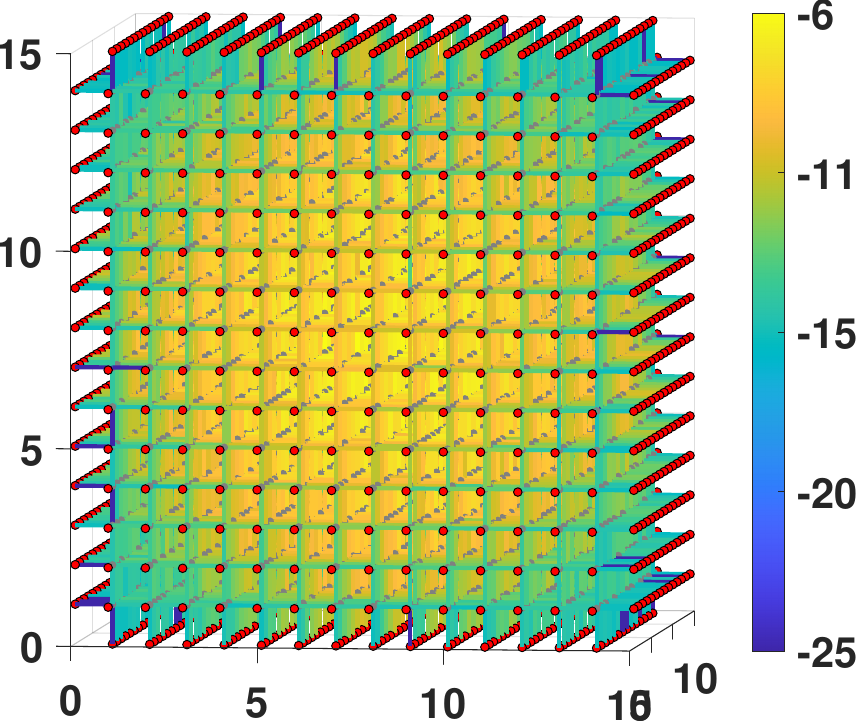}  
&   \includegraphics[width=0.33\linewidth]{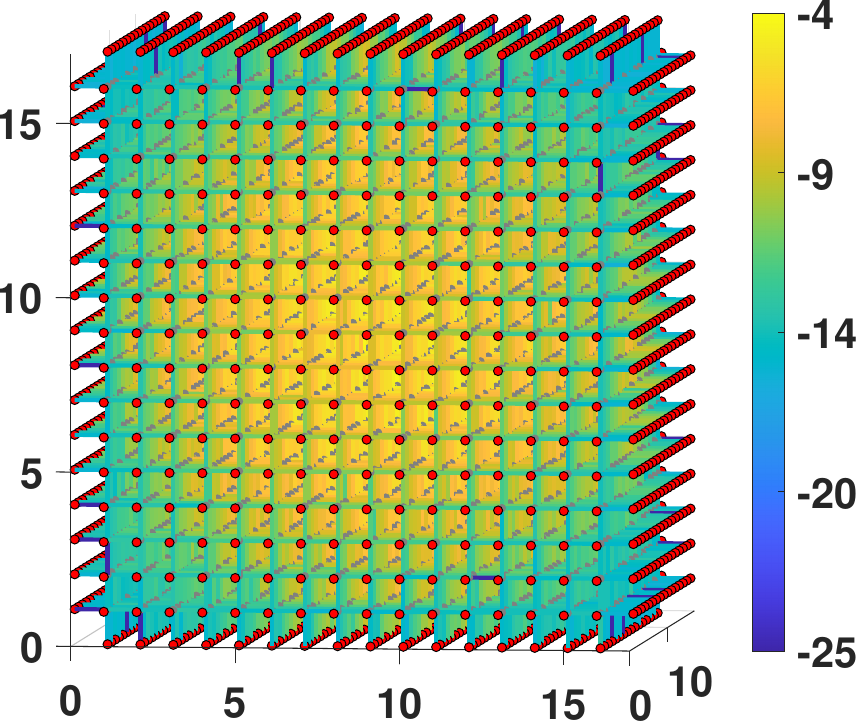}&\includegraphics[width=0.33\linewidth]{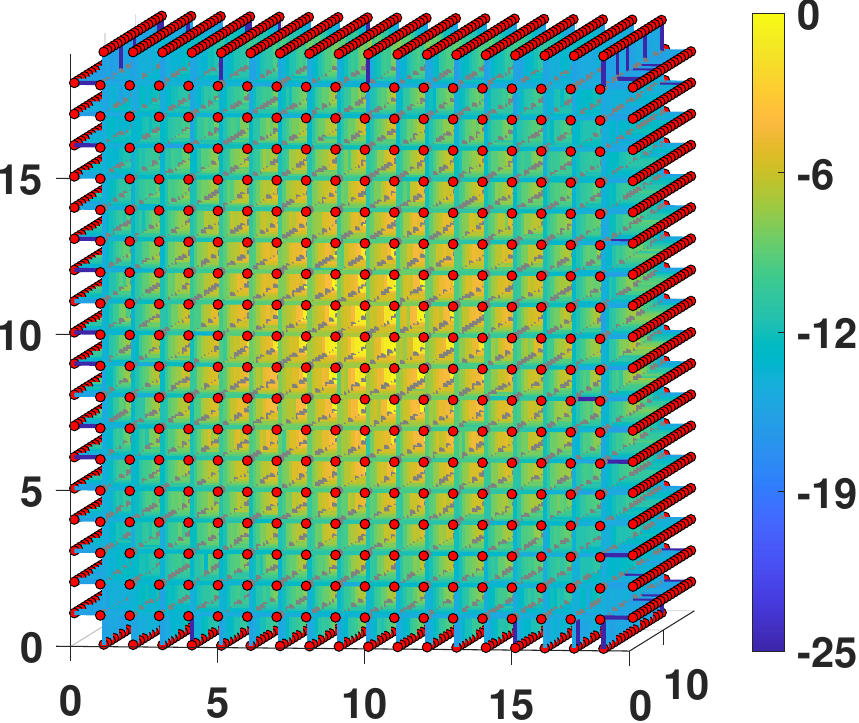}
\\$n={14}$
&$n={16}$&$n={18}$\\
    \end{tabular}
    \caption{The exact conductivity $($top$)$ and the logarithm of the pointwise error $\log_{10} |\mathbf{e}_\gamma|$ $($bottom$)$ of the recovered conductivity for the three-dimensional cubic lattice.}
    \label{fig:Recovered_Cond2}
\end{figure}

\begin{figure}[hbt!]
    \centering
    \begin{tabular}{cc}
      \includegraphics[width=0.46\linewidth]{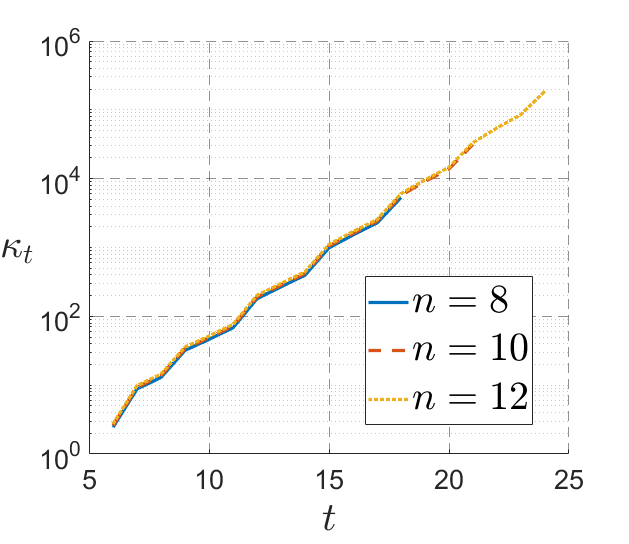}   &  \includegraphics[width=0.46\linewidth]{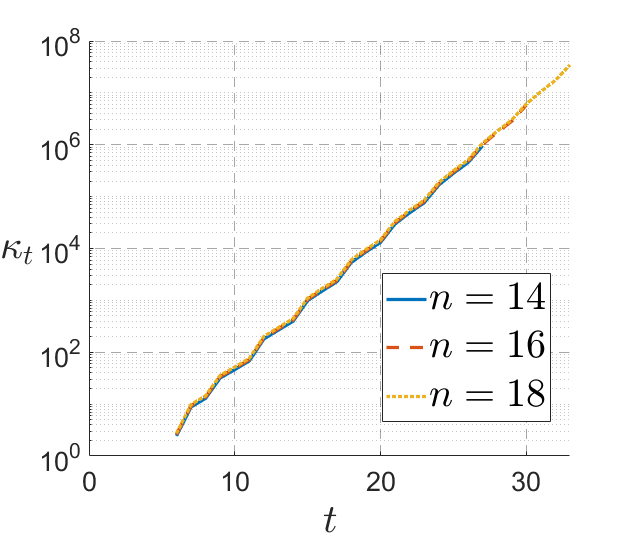}
    \end{tabular}
    \caption{{ The condition number $\kappa_t$ of the linear system involved in Step 5 of Algorithm \ref{alg:calderon}, with different $t$ varying from $6$ to  $33$. The results are for $d=3$.}}
    \label{fig:kappa}
\end{figure}

\bibliographystyle{abbrv}
\bibliography{references}

@article{CurtisMorrow:1990,
  title={Determining the resistors in a network},
  author={Curtis, Edward B and Morrow, James A},
  journal={SIAM J. Appl. Math.},
  volume={50},
  number={3},
  pages={918--930},
  year={1990},
  publisher={SIAM}
}

@article{GernandtRohleder:2022,
    author = {Gernandt, Hannes and  Rohleder, Jonathan},
    title = {A {C}alder\'{o}n type inverse problem for tree graphs},
    journal = {Linear Algebra Appl.},
    year = {2022},
    volume = {646},
    number = {},
     pages = {29--42},
}

@book{FeldmanSaloUhlmann:2025,
    author = {Feldman, Joel and Salo, Mikko and Uhlmann, Gunther},
    title = {{The {C}alder\'on Problem - An Introduction to Inverse Problems}},
    publisher = {AMS, Providence, RI},
    year = {2025}, 
}

@techreport{Oberlin:2010,
    author = {Oberlin, R},
    title = {Discrete inverse problems for Schrödinger and resistor networks},
    institution = {University of Washington, Seattle},
    year = {2010},
}

@article{Uhlmann1987,
    AUTHOR = {Sylvester, John and Uhlmann, Gunther},
     TITLE = {A global uniqueness theorem for an inverse boundary value
              problem},
   JOURNAL = {Ann. of Math. (2)},
  FJOURNAL = {Annals of Mathematics. Second Series},
    VOLUME = {125},
      YEAR = {1987},
    NUMBER = {1},
     PAGES = {153--169},
      ISSN = {0003-486X,1939-8980},
   MRCLASS = {35R30 (86A20)},
  MRNUMBER = {873380},
MRREVIEWER = {P.\ Szeptycki},
       DOI = {10.2307/1971291},
       URL = {https://doi.org/10.2307/1971291},
}

@article{beretta2024discrete,
    AUTHOR = {Beretta, Elena and Deng, Maolin and Gandolfi, Alberto and Jin,
              Bangti},
     TITLE = {The discrete inverse conductivity problem solved by the
              weights of an interpretable neural network},
   JOURNAL = {J. Comput. Phys.},
  FJOURNAL = {Journal of Computational Physics},
    VOLUME = {538},
      YEAR = {2025},
     PAGES = {114162, 24 pp.},
      ISSN = {0021-9991,1090-2716},
   MRCLASS = {35R02 (35B35 35R30 68T07 86A22 94Cxx)},
  MRNUMBER = {4918568},
       DOI = {10.1016/j.jcp.2025.114162},
       URL = {https://doi.org/10.1016/j.jcp.2025.114162},
}

@article {Uhlmann:2009,
    AUTHOR = {Uhlmann, G.},
     TITLE = {Electrical impedance tomography and {C}alder\'on's problem},
   JOURNAL = {Inverse Problems},
  FJOURNAL = {Inverse Problems. An International Journal on the Theory and
              Practice of Inverse Problems, Inverse Methods and Computerized
              Inversion of Data},
    VOLUME = {25},
      YEAR = {2009},
    NUMBER = {12},
     PAGES = {123011, 39 pp.},
      ISSN = {0266-5611,1361-6420},
   MRCLASS = {78A48 (35-02 35J25 35R30)},
  MRNUMBER = {3460047},
MRREVIEWER = {Sergey\ G.\ Pyatkov},
       DOI = {10.1088/0266-5611/25/12/123011},
       URL = {https://doi.org/10.1088/0266-5611/25/12/123011},
}

@article {CurtisMorrow:1991,
    AUTHOR = {Curtis, Edward B. and Morrow, James A.},
     TITLE = {The {D}irichlet to {N}eumann map for a resistor network},
   JOURNAL = {SIAM J. Appl. Math.},
  FJOURNAL = {SIAM Journal on Applied Mathematics},
    VOLUME = {51},
      YEAR = {1991},
    NUMBER = {4},
     PAGES = {1011--1029},
      ISSN = {0036-1399},
   MRCLASS = {31A25 (15A48 94C15)},
  MRNUMBER = {1117430},
MRREVIEWER = {Fran\c cois\ Aribaud},
       DOI = {10.1137/0151051},
       URL = {https://doi.org/10.1137/0151051},
}

@article {BoyerGarzella:2016,
    AUTHOR = {Boyer, Justin and Garzella, Jack J. and Guevara Vasquez,
              Fernando},
     TITLE = {On the solvability of the discrete conductivity and
              {S}chr\"odinger inverse problems},
   JOURNAL = {SIAM J. Appl. Math.},
  FJOURNAL = {SIAM Journal on Applied Mathematics},
    VOLUME = {76},
      YEAR = {2016},
    NUMBER = {3},
     PAGES = {1053--1075},
      ISSN = {0036-1399,1095-712X},
   MRCLASS = {35R30 (05C22 05C50 05C80 35J10 35J25)},
  MRNUMBER = {3507557},
MRREVIEWER = {Patricia\ Gaitan},
       DOI = {10.1137/15M1043479},
       URL = {https://doi.org/10.1137/15M1043479},
}

@article {BlastenIsozaki:2023,
    AUTHOR = {Bl{\aa}sten, Emilia and Isozaki, Hiroshi and Lassas, Matti and
              Lu, Jinpeng},
     TITLE = {Inverse problems for discrete heat equations and random walks
              for a class of graphs},
   JOURNAL = {SIAM J. Discrete Math.},
  FJOURNAL = {SIAM Journal on Discrete Mathematics},
    VOLUME = {37},
      YEAR = {2023},
    NUMBER = {2},
     PAGES = {831--863},
      ISSN = {0895-4801,1095-7146},
   MRCLASS = {05C50 (05C22 05C81)},
  MRNUMBER = {4598377},
MRREVIEWER = {Enno\ Pais},
       DOI = {10.1137/21M1439936},
       URL = {https://doi.org/10.1137/21M1439936},
}

@article {BlastenIsozakiLassasLu:2023,
    AUTHOR = {Bl{\aa}sten, Emilia and Isozaki, Hiroshi and Lassas, Matti and
              Lu, Jinpeng},
     TITLE = {Gelfand's inverse problem for the graph {L}aplacian},
   JOURNAL = {J. Spectr. Theory},
  FJOURNAL = {Journal of Spectral Theory},
    VOLUME = {13},
      YEAR = {2023},
    NUMBER = {1},
     PAGES = {1--45},
      ISSN = {1664-039X,1664-0403},
   MRCLASS = {35R02 (05C50 35J05 52C25)},
  MRNUMBER = {4620352},
       DOI = {10.4171/jst/455},
       URL = {https://doi.org/10.4171/jst/455},
}

@article {Ingerman:2000,
    AUTHOR = {Ingerman, David V.},
     TITLE = {Discrete and continuous {D}irichlet-to-{N}eumann maps in the
              layered case},
   JOURNAL = {SIAM J. Math. Anal.},
  FJOURNAL = {SIAM Journal on Mathematical Analysis},
    VOLUME = {31},
      YEAR = {2000},
    NUMBER = {6},
     PAGES = {1214--1234},
      ISSN = {0036-1410,1095-7154},
   MRCLASS = {39A12 (35J25 35R30)},
  MRNUMBER = {1766562},
MRREVIEWER = {Aleksander\ Denisiuk},
       DOI = {10.1137/S0036141097326581},
       URL = {https://doi.org/10.1137/S0036141097326581},
}

@article {Corbett:2025,
    AUTHOR = {Corbett, Marcus and Guevara Vasquez, Fernando and Royzman,
              Alexander and Yang, Guang},
     TITLE = {Discrete inverse problems with internal functionals},
   JOURNAL = {Inverse Problems},
  FJOURNAL = {Inverse Problems. An International Journal on the Theory and
              Practice of Inverse Problems, Inverse Methods and Computerized
              Inversion of Data},
    VOLUME = {41},
      YEAR = {2025},
    NUMBER = {4},
     PAGES = {045004, 24 pp.},
      ISSN = {0266-5611,1361-6420},
   MRCLASS = {35R30 (35J57 35R02 65K10 65M32 94C05)},
  MRNUMBER = {4882173},
       DOI = {10.1088/1361-6420/adbd69},
       URL = {https://doi.org/10.1088/1361-6420/adbd69},
}

@incollection {BorceaDruskinMamonov:2013,
    AUTHOR = {Borcea, Liliana and Druskin, Vladimir and Guevara Vasquez,
              Fernando and Mamonov, Alexander V.},
     TITLE = {Resistor network approaches to electrical impedance
              tomography},
 BOOKTITLE = {{Inverse Problems and Applications: Inside Out. {II}}},
     PAGES = {55--118},
 PUBLISHER = {Cambridge Univ. Press, Cambridge},
      YEAR = {2013},
      ISBN = {978-1-107-03201-9},
   MRCLASS = {65N21 (35J25 35R30 65N08)},
  MRNUMBER = {3098656},
MRREVIEWER = {Ruben\ D.\ Spies},
}

@incollection {Calderon:1980,
    AUTHOR = {Calder\'on, Alberto-P.},
     TITLE = {On an inverse boundary value problem},
 BOOKTITLE = {Seminar on {N}umerical {A}nalysis and its {A}pplications to
              {C}ontinuum {P}hysics ({R}io de {J}aneiro, 1980)},
     PAGES = {65--73},
 PUBLISHER = {Soc. Brasil. Mat., Rio de Janeiro},
      YEAR = {1980},
   MRCLASS = {35R30 (35K60)},
  MRNUMBER = {590275},
MRREVIEWER = {J.\ R.\ Cannon},
}

@article {BorceaMamonov:2017,
    AUTHOR = {Borcea, Liliana and Guevara Vasquez, Fernando and Mamonov,
              Alexander V.},
     TITLE = {A discrete {L}iouville identity for numerical reconstruction
              of {S}chr\"odinger potentials},
   JOURNAL = {Inverse Probl. Imaging},
  FJOURNAL = {Inverse Problems and Imaging},
    VOLUME = {11},
      YEAR = {2017},
    NUMBER = {4},
     PAGES = {623--641},
      ISSN = {1930-8337,1930-8345},
   MRCLASS = {35R30 (05C22 35J25)},
  MRNUMBER = {3668344},
MRREVIEWER = {Enno\ Pais},
       DOI = {10.3934/ipi.2017029},
       URL = {https://doi.org/10.3934/ipi.2017029},
}

@article {Morioka:2011,
    AUTHOR = {Morioka, Hisashi},
     TITLE = {Inverse boundary value problems for discrete {S}chr\"odinger
              operators on the multi-dimensional square lattice},
   JOURNAL = {Inverse Probl. Imaging},
  FJOURNAL = {Inverse Problems and Imaging},
    VOLUME = {5},
      YEAR = {2011},
    NUMBER = {3},
     PAGES = {715--730},
      ISSN = {1930-8337,1930-8345},
   MRCLASS = {35R30 (35J10 39A12 65N21)},
  MRNUMBER = {2825735},
       DOI = {10.3934/ipi.2011.5.715},
       URL = {https://doi.org/10.3934/ipi.2011.5.715},
}

@article {HorvathMarko:2016,
    AUTHOR = {Horv\'ath, Mikl\'os and Mark\'o, Zolt\'an},
     TITLE = {Discrete inverse problems for the {S}chr\"odinger operator on
              the multi-dimensional square lattice with partial {C}auchy
              data},
   JOURNAL = {Inverse Problems},
  FJOURNAL = {Inverse Problems. An International Journal on the Theory and
              Practice of Inverse Problems, Inverse Methods and Computerized
              Inversion of Data},
    VOLUME = {32},
      YEAR = {2016},
    NUMBER = {5},
     PAGES = {055006, 9},
      ISSN = {0266-5611,1361-6420},
   MRCLASS = {35R30 (35J25 65N21)},
  MRNUMBER = {3488514},
MRREVIEWER = {Enno\ Pais},
       DOI = {10.1088/0266-5611/32/5/055006},
       URL = {https://doi.org/10.1088/0266-5611/32/5/055006},
}

@article {BorceaDruskinMamonov:2010a,
    AUTHOR = {Borcea, L. and Druskin, V. and Mamonov, A. V.},
     TITLE = {Circular resistor networks for electrical impedance tomography
              with partial boundary measurements},
   JOURNAL = {Inverse Problems},
  FJOURNAL = {Inverse Problems. An International Journal on the Theory and
              Practice of Inverse Problems, Inverse Methods and Computerized
              Inversion of Data},
    VOLUME = {26},
      YEAR = {2010},
    NUMBER = {4},
     PAGES = {045010, 30},
      ISSN = {0266-5611,1361-6420},
   MRCLASS = {65N21 (35J25 35R30)},
  MRNUMBER = {2608623},
MRREVIEWER = {Dinh Nho H\`ao},
       DOI = {10.1088/0266-5611/26/4/045010},
       URL = {https://doi.org/10.1088/0266-5611/26/4/045010},
}

@article {BorceaDruskinMamonov:2010,
    AUTHOR = {Borcea, L. and Druskin, V. and Mamonov, A. V. and Guevara
              Vasquez, F.},
     TITLE = {Pyramidal resistor networks for electrical impedance
              tomography with partial boundary measurements},
   JOURNAL = {Inverse Problems},
  FJOURNAL = {Inverse Problems. An International Journal on the Theory and
              Practice of Inverse Problems, Inverse Methods and Computerized
              Inversion of Data},
    VOLUME = {26},
      YEAR = {2010},
    NUMBER = {10},
     PAGES = {105009, 36},
      ISSN = {0266-5611,1361-6420},
   MRCLASS = {65N21 (78A70 92C55)},
  MRNUMBER = {2719770},
       DOI = {10.1088/0266-5611/26/10/105009},
       URL = {https://doi.org/10.1088/0266-5611/26/10/105009},
}

@article {CurtisIngermanMorrow:1998,
    AUTHOR = {Curtis, E. B. and Ingerman, D. and Morrow, J. A.},
     TITLE = {Circular planar graphs and resistor networks},
   JOURNAL = {Linear Algebra Appl.},
  FJOURNAL = {Linear Algebra and its Applications},
    VOLUME = {283},
      YEAR = {1998},
    NUMBER = {1-3},
     PAGES = {115--150},
      ISSN = {0024-3795,1873-1856},
   MRCLASS = {05C38 (05C40 05C50 94C05)},
  MRNUMBER = {1657214},
MRREVIEWER = {W.-K.\ Chen},
       DOI = {10.1016/S0024-3795(98)10087-3},
       URL = {https://doi.org/10.1016/S0024-3795(98)10087-3},
}

@article {deVerdiGitler:1996,
    AUTHOR = {Colin de Verdi\`ere, Yves and Gitler, Isidoro and Vertigan,
              Dirk},
     TITLE = {R\'eseaux \'electriques planaires. {II}},
   JOURNAL = {Comment. Math. Helv.},
  FJOURNAL = {Commentarii Mathematici Helvetici},
    VOLUME = {71},
      YEAR = {1996},
    NUMBER = {1},
     PAGES = {144--167},
      ISSN = {0010-2571,1420-8946},
   MRCLASS = {05C10 (05C90 94C99)},
  MRNUMBER = {1371682},
MRREVIEWER = {Martin\ \v Skoviera},
       DOI = {10.1007/BF02566413},
       URL = {https://doi.org/10.1007/BF02566413},
}

@article {BorceaDruskin:2008,
    AUTHOR = {Borcea, Liliana and Druskin, Vladimir and Guevara Vasquez,
              Fernando},
     TITLE = {Electrical impedance tomography with resistor networks},
   JOURNAL = {Inverse Problems},
  FJOURNAL = {Inverse Problems. An International Journal on the Theory and
              Practice of Inverse Problems, Inverse Methods and Computerized
              Inversion of Data},
    VOLUME = {24},
      YEAR = {2008},
    NUMBER = {3},
     PAGES = {035013, 31},
      ISSN = {0266-5611,1361-6420},
   MRCLASS = {78A70 (35J25 35Q60 65N21)},
  MRNUMBER = {2421967},
       DOI = {10.1088/0266-5611/24/3/035013},
       URL = {https://doi.org/10.1088/0266-5611/24/3/035013},
}

@article {CurtisMorrow:1994,
    AUTHOR = {Curtis, E. and Mooers, E. and Morrow, J.},
     TITLE = {Finding the conductors in circular networks from boundary
              measurements},
   JOURNAL = {RAIRO Mod\'el. Math. Anal. Num\'er.},
  FJOURNAL = {RAIRO Mod\'elisation Math\'ematique et Analyse Num\'erique},
    VOLUME = {28},
      YEAR = {1994},
    NUMBER = {7},
     PAGES = {781--814},
      ISSN = {0764-583X},
   MRCLASS = {65N99 (94C05)},
  MRNUMBER = {1309415},
       DOI = {10.1051/m2an/1994280707811},
       URL = {https://doi.org/10.1051/m2an/1994280707811},
}

@article {ChungGilbert:2017,
    AUTHOR = {Chung, Francis J. and Gilbert, Anna C. and Hoskins, Jeremy G.
              and Schotland, John C.},
     TITLE = {Optical tomography on graphs},
   JOURNAL = {Inverse Problems},
  FJOURNAL = {Inverse Problems. An International Journal on the Theory and
              Practice of Inverse Problems, Inverse Methods and Computerized
              Inversion of Data},
    VOLUME = {33},
      YEAR = {2017},
    NUMBER = {5},
     PAGES = {055016, 21},
      ISSN = {0266-5611,1361-6420},
   MRCLASS = {05C90 (65J22)},
  MRNUMBER = {3634446},
       DOI = {10.1088/1361-6420/aa66d1},
       URL = {https://doi.org/10.1088/1361-6420/aa66d1},
}

@article {LamPylyavskyy:2012,
    AUTHOR = {Lam, Thomas and Pylyavskyy, Pavlo},
     TITLE = {Inverse problem in cylindrical electrical networks},
   JOURNAL = {SIAM J. Appl. Math.},
  FJOURNAL = {SIAM Journal on Applied Mathematics},
    VOLUME = {72},
      YEAR = {2012},
    NUMBER = {3},
     PAGES = {767--788},
      ISSN = {0036-1399,1095-712X},
   MRCLASS = {94C05 (05C60)},
  MRNUMBER = {2968749},
       DOI = {10.1137/110846476},
       URL = {https://doi.org/10.1137/110846476},
}

@incollection {Arauz:2016,
    AUTHOR = {Ara\'uz, C. and Carmona, \'A. and Encinas, A. M. and Mitjana,
              M.},
     TITLE = {Recovering the conductances on grids: a theoretical
              justification},
 BOOKTITLE = {{A Panorama of Mathematics: Pure and Applied}},
    SERIES = {Contemp. Math.},
    VOLUME = {658},
     PAGES = {149--166},
 PUBLISHER = {Amer. Math. Soc., Providence, RI},
      YEAR = {2016},
      ISBN = {978-1-4704-1668-3},
   MRCLASS = {35R30 (35J10 35N25 35R02 78A46 78A48)},
  MRNUMBER = {3475279},
MRREVIEWER = {Hideo\ Nakazawa},
       DOI = {10.1090/conm/658/13122},
       URL = {https://doi.org/10.1090/conm/658/13122},
}

@article {ChungBerenstein:2005,
    AUTHOR = {Chung, Soon-Yeong and Berenstein, Carlos A.},
     TITLE = {{$\omega$}-harmonic functions and inverse conductivity
              problems on networks},
   JOURNAL = {SIAM J. Appl. Math.},
  FJOURNAL = {SIAM Journal on Applied Mathematics},
    VOLUME = {65},
      YEAR = {2005},
    NUMBER = {4},
     PAGES = {1200--1226},
      ISSN = {0036-1399,1095-712X},
   MRCLASS = {35R30 (94C12)},
  MRNUMBER = {2147325},
MRREVIEWER = {Alexander\ Yurjevich\ Chebotarev},
       DOI = {10.1137/S0036139903432743},
       URL = {https://doi.org/10.1137/S0036139903432743},
}

@article {Borcea,
    AUTHOR = {Borcea, Liliana},
     TITLE = {Electrical impedance tomography},
   JOURNAL = {Inverse Problems},
  FJOURNAL = {Inverse Problems. An International Journal on the Theory and
              Practice of Inverse Problems, Inverse Methods and Computerized
              Inversion of Data},
    VOLUME = {18},
      YEAR = {2002},
    NUMBER = {6},
     PAGES = {R99--R136},
      ISSN = {0266-5611,1361-6420},
   MRCLASS = {92C55 (35Q60 78A55)},
  MRNUMBER = {1955896},
       DOI = {10.1088/0266-5611/18/6/201},
       URL = {https://doi.org/10.1088/0266-5611/18/6/201},
}

\end{document}